\makeatletter\newenvironment{graybox}{%
   \begin{lrbox}{\@tempboxa}\begin{minipage}{\columnwidth}}{\end{minipage}\end{lrbox}%
   \colorbox{gray!25}{\usebox{\@tempboxa}}
}\makeatother
\newcounter{mylinenum}
\def\codeTabSpace{\hspace*{6mm}}
\newenvironment{code}%
{\begin{tabbing}%
\codeTabSpace \= \hspace*{72mm} \= \hspace*{33mm} \= \kill%
}%
{\end{tabbing}%
}
\newcounter{ind}
\newcommand{\n}{\addtocounter{ind}{5}\hspace*{5mm}}
\newcommand{\p}{\addtocounter{ind}{-5}\hspace*{-5mm}}
\newcommand{\nl}{\\\stepcounter{mylinenum}{\scriptsize \arabic{mylinenum}}\>\hspace*{\value{ind}mm}}
\newcommand{\ul}{\\\>\hspace*{\value{ind}mm}}
\newcommand{\bl}{\\[-1.5mm]\>\hspace*{\value{ind}mm}}
\newcommand{\firstline}{\stepcounter{mylinenum}{\scriptsize \arabic{mylinenum}}\>}
\newcommand{\comnospace}{\mbox{$\triangleright$}}
\newcommand{\com}{\mbox{\comnospace\ }}
\newcommand{\TRUE}{\mbox{\sc True}}
\newcommand{\FALSE}{\mbox{\sc False}}
\newcommand{\func}[1]{\mbox{\sc #1}}
\newcommand{\CAS}{\textbf{\func{CAS}}}
\newcommand{\tabtabcom}{\>\>\com}
\newcommand{\tabcom}{\>\com}
\newcommand{\key}{Key}
\newcommand{\info}{Descriptor}
\newcommand{\fldinfo}{pending}
\newcommand{\node}{Node}
\newcommand{\leaf}{Leaf}
\newcommand{\internal}{Internal}
\newcommand{\iinfo}{ReplaceFlag}
\newcommand{\dinfo}{PruneFlag}
\newcommand{\mk}{Mark}
\newcommand{\clean}{Clean}
\newcommand{\Isimple}{Simple insertion}
\newcommand{\Icomplex}{Sprouting insertion}
\newcommand{\Dsimple}{Simple deletion}
\newcommand{\Dcomplex}{Pruning deletion}
\newcommand{\cas}{CAS}
\newcommand{\mcas}{\textit{Mark~\cas}}
\newcommand{\ifcas}{\textit{Rflag~\cas}}
\newcommand{\dfcas}{\textit{Pflag~\cas}}
\newcommand{\iucas}{\textit{Runflag~\cas}}
\newcommand{\ducas}{\textit{Punflag~\cas}}
\newcommand{\bcas}{\textit{Backtrack~\cas}}
\newcommand{\ichildcas}{\textit{Rchild~\cas}}
\newcommand{\dchildcas}{\textit{Pchild~\cas}}
\newcommand{\typeof}{type}
\newcommand{\helpinsert}{\func{HelpReplace}}
\newcommand{\helpdelete}{\func{HelpPrune}}
\newcommand{\helpmarked}{\func{HelpMarked}}
\newcommand{\help}{\func{Help}}
\newcommand{\insertop}{\textsc{Insert}}
\newcommand{\deleteop}{\textsc{Delete}}
\newcommand{\true}{\textsc{True}}
\newcommand{\false}{\textsc{False}}
\newcommand{\nul}{\textsc{Null}}
\newcommand{\kst}{\mbox{$k$-ST}}
\newcommand{\codesize}{\footnotesize}
\newcommand{\ins}{\func{Insert}}
\newcommand{\del}{\func{Delete}}
\newcommand{\find}{\func{Find}}
\newcommand{\rquery}{\func{RangeQuery}}
\newcounter{ttheorem}
\newtheorem{obs}[ttheorem]{Observation}{\bfseries}{\itshape}
\newtheorem{lem}[ttheorem]{Lemma}{\bfseries}{\itshape}
\newtheorem{thm}[ttheorem]{Theorem}{\bfseries}{\itshape}
{\bfseries}{\itshape}
\newtheorem{defn}[ttheorem]{Definition}{\bfseries}{\itshape}
\tikzstyle{tri} = [regular polygon,regular polygon sides=3,draw=black!80,fill=blue!13,minimum size=16mm]
\tikzstyle{dashed square} = [rectangle,dotted,draw=black!80,minimum size=5mm]
\tikzstyle{square} = [rectangle,draw=black!80,fill=blue!13,minimum size=5.3mm]
\tikzstyle{ghost square} = [rectangle,draw=black!35,fill=blue!5,minimum size=5.3mm]
\tikzstyle{ghost path} = [draw=black!35,thick,-]
\tikzstyle{place} = [circle,thick,draw=blue!80,fill=blue!13,minimum size=7mm]
\tikzstyle{red square} = [square,draw=black!80,fill=red!13]
\tikzstyle{red place} = [place,draw=red!75,fill=red!20]
\tikzstyle{green square} = [square,draw=black!80,fill=green!13]
\tikzstyle{invisiplace} = [minimum size=6mm]
\tikzstyle{every path} = [draw=black!80,thick,-]
\tikzstyle{soft path} = [draw=gray!80,-]
\tikzstyle{big path} = [draw=blue!50,-,double]
\tikzstyle{every label} = [gray]
\tikzstyle{blacklabel} = [black]
\begin{document}

\makeatletter
%
%
\pgfdeclareshape{rectangle with diagonal fill}
{
    \inheritsavedanchors[from=rectangle]
    \inheritanchorborder[from=rectangle]
    \inheritanchor[from=rectangle]{north}
    \inheritanchor[from=rectangle]{north west}
    \inheritanchor[from=rectangle]{north east}
    \inheritanchor[from=rectangle]{center}
    \inheritanchor[from=rectangle]{west}
    \inheritanchor[from=rectangle]{east}
    \inheritanchor[from=rectangle]{mid}
    \inheritanchor[from=rectangle]{mid west}
    \inheritanchor[from=rectangle]{mid east}
    \inheritanchor[from=rectangle]{base}
    \inheritanchor[from=rectangle]{base west}
    \inheritanchor[from=rectangle]{base east}
    \inheritanchor[from=rectangle]{south}
    \inheritanchor[from=rectangle]{south west}
    \inheritanchor[from=rectangle]{south east}

    \inheritbackgroundpath[from=rectangle]
    \inheritbeforebackgroundpath[from=rectangle]
    \inheritbehindforegroundpath[from=rectangle]
    \inheritforegroundpath[from=rectangle]
    \inheritbeforeforegroundpath[from=rectangle]

    \behindbackgroundpath{%
        \pgfextractx{\pgf@xa}{\southwest}%
        \pgfextracty{\pgf@ya}{\southwest}%
        \pgfextractx{\pgf@xb}{\northeast}%
        \pgfextracty{\pgf@yb}{\northeast}%
        \ifpgf@diagonal@lefttoright
            \def\pgf@diagonal@point@a{\pgfpoint{\pgf@xa}{\pgf@yb}}%
            \def\pgf@diagonal@point@b{\pgfpoint{\pgf@xb}{\pgf@ya}}%
        \else
            \def\pgf@diagonal@point@a{\southwest}%
            \def\pgf@diagonal@point@b{\northeast}%
        \fi
        \pgfpathmoveto{\pgf@diagonal@point@a}%
        \pgfpathlineto{\northeast}%
        \pgfpathlineto{\pgfpoint{\pgf@xb}{\pgf@ya}}%
        \pgfpathclose
        \ifpgf@diagonal@lefttoright
            \color{\pgf@diagonal@top@color}%
        \else
            \color{\pgf@diagonal@bottom@color}%
        \fi
        \pgfusepath{fill}%
        \pgfpathmoveto{\pgfpoint{\pgf@xa}{\pgf@yb}}%
        \pgfpathlineto{\southwest}%
        \pgfpathlineto{\pgf@diagonal@point@b}%
        \pgfpathclose
        \ifpgf@diagonal@lefttoright
            \color{\pgf@diagonal@bottom@color}%
        \else
            \color{\pgf@diagonal@top@color}%
        \fi
        \pgfusepath{fill}%
    }
}

\newif\ifpgf@diagonal@lefttoright
\def\pgf@diagonal@top@color{white}
\def\pgf@diagonal@bottom@color{gray!30}

\def\pgfsetdiagonaltopcolor#1{\def\pgf@diagonal@top@color{#1}}%
\def\pgfsetdiagonalbottomcolor#1{\def\pgf@diagonal@bottom@color{#1}}%
\def\pgfsetdiagonallefttoright{\pgf@diagonal@lefttorighttrue}%
\def\pgfsetdiagonalrighttoleft{\pgf@diagonal@lefttorightfalse}%

\tikzoption{diagonal top color}{\pgfsetdiagonaltopcolor{#1}}
\tikzoption{diagonal bottom color}{\pgfsetdiagonalbottomcolor{#1}}
\tikzoption{diagonal from left to right}[]{\pgfsetdiagonallefttoright}
\tikzoption{diagonal from right to left}[]{\pgfsetdiagonalrighttoleft}
\makeatother

\newcommand{\trevor}[1]{\textbf{[[[#1--Trevor]]]}}

\title{Range Queries in Non-blocking $k$-ary Search Trees}
\author{Trevor Brown$^1$, Hillel Avni$^2$
}
\affil{
$^1$Dept. of Computer Science, IST Austria\\%
me@tbrown.pro\\%
$^2$Huawei Technologies, European Research Institute\\%
hillel.avni@huawei.com
}
\date{}
\maketitle

\begin{abstract}
We present a linearizable, non-blocking $k$-ary search tree (\kst)
that supports fast searches and range queries.
Our algorithm uses single-word compare-and-swap (CAS) operations,
and tolerates any number of crash failures.
Performance experiments show that, for workloads containing small range queries, our \kst\ significantly outperforms other algorithms which support these operations, and rivals the performance of a leading concurrent skip-list, which provides range queries that cannot always be linearized.

%
\end{abstract}

\section{Introduction and Related Work}

The ordered set abstract data type (ADT) represents a set of keys drawn from an ordered universe, and supports three operations: \func{Insert}$(key)$, \func{Delete}$(key)$, and \func{Find}$(key)$.  
We add to these an operation \func{RangeQuery}$(a,b)$, where $a\le b$, which returns all keys in the closed interval $[a,b]$.
This is useful for various database applications.

Perhaps the most straightforward way to implement this ADT is to employ software transactional memory (STM) \cite{Shavit:1995:STM:224964.224987}. 
STM 
allows a programmer to specify that certain blocks of code should be executed atomically, relative to one another.
Recently, several fast binary search tree algorithms using STM have been introduced \cite{DBLP:conf/podc/BronsonCCO10,DBLP:conf/opodis/AfekAS11}.  Although they offer good performance for \func{Insert}s, \func{Delete}s and \func{Find}s, they achieve this performance, in part, by carefully limiting the amount of data protected by their transactions. 
However, since computing a range query means protecting all keys in the range from change during a transaction, STM techniques presently involve too much overhead to be applied to this problem.

Another simple approach is to lock the entire data structure, and compute a range query while it is locked. 
One can refine this technique by using a more fine-grained locking scheme, so that only part of the data structure needs to be locked to perform an update or compute a range query.
For instance, in leaf-oriented trees, where all keys in the set are stored in the leaves of the tree, updates to the tree can be performed by local modifications close to the leaves.  Therefore, it is often sufficient to lock only the last couple of nodes on the path to a leaf, rather than the entire path from the root.
However, as was the case for STM, a range query can only be computed if every key in the range is protected, so typically every node containing a key in the range must be locked.

Persistent data structures \cite{Okasaki:1998:PFD:280586} offer another approach. 
The nodes in a persistent data structure are immutable, so updates create new nodes, rather than modifying existing ones. 
In the case of a persistent tree, a change to one node involves recreating the entire path from the root to that node.
After the change, the data structure has a new root, and the old version of the data structure remains accessible (via the old root).
Hence, it is trivial to implement range queries in a persistent tree. 
However, significant downsides include contention at the root, and the duplication of many nodes during updates.

Brown and Helga \cite{BH11:opodis} presented a \kst\ in which each internal node has $k$ children, and each leaf contains up to $k-1$ keys.  For large values of $k$, this translates into an algorithm which minimizes cache misses and benefits from processor pre-fetching mechanisms.
In some ways, the \kst\ is similar to a persistent data structure.
The keys of a node are immutable, but the child pointers of a node can be changed.
The structure is also leaf-oriented, meaning that all keys in the set are stored in the leaves of the tree.
Hence, when an update adds or removes a key from the set, the leaf into which the key should be inserted, or from which the key should be deleted, is simply replaced by a new leaf.
Since the old leaf's keys remains unmodified, range queries using this leaf need only check that it has not been replaced by another leaf to determine that its keys are all in the data structure.
To make this more efficient, we modify this structure by adding a \textit{tag} field to each leaf, which is set just before the leaf is replaced.

Braginsky and Petrank \cite{Braginsky:2012:LB:2312005.2312016} presented a non-blocking B$^+$tree, another search tree of large arity.  However, whereas the \kst's nodes have immutable keys, the nodes of Braginsky's B$^+$tree do not.  Hence, our technique for performing range queries cannot be efficiently applied to their data structure.

Snapshots offer another approach for implementing range queries.
If we could quickly take a snapshot of the data structure, then we could simply perform a sequential range query on the result.
The snapshot object is a vector $V$ of data elements supporting two operations: \func{Update}$(i, val)$, which atomically sets $V_i$ to $val$, and \func{Scan}, which atomically reads and returns all of the elements of $V$.
\func{Scan} can be implemented by repeatedly performing a pair of \func{Collect}s (which read each element of $V$ in sequence and return a new vector containing the values it read) until the results of the two \func{Collect}s are equal \cite{Afek:1993:ASS:153724.153741}.
Attiya, et~al. \cite{Attiya:2008:PSO:1378533.1378591} introduced \textit{partial snapshots}, offering a modified \func{Scan}$(i_1, i_2, ..., i_n)$ operation which operates on a subset of the elements of $V$.
Their construction requires both CAS and fetch-and-add.

Recently, two high-performance tree offering $O(1)$ time snapshots have been published.
Both structures use a lazy copy-on-write scheme that we now describe.

Ctrie is a non-blocking concurrent hash trie due to Prokopec et~al. \cite{PBBO12:ppopp}.
Keys are hashed, and the bits of these hashes are used to navigate the trie.
To facilitate the computation of fast snapshots, a sequence number is associated with each node in the data structure.
Each time a snapshot is taken, the root is copied and its sequence number is incremented.
An update or search in the trie reads this sequence number $seq$ when it starts and, while traversing the trie, it duplicates each node whose sequence number is less than $seq$.
The update then performs a variant of a double-compare-single-swap operation to atomically change a pointer while ensuring the root's current sequence number matches $seq$.
Because keys are ordered by their hashes in the trie, it is hard to use Ctrie to efficiently implement range queries.
To do so, one must iterate over all keys in the snapshot.

The second tree, Snap, is a lock-based AVL tree due to Bronson et~al. \cite{BCCO10:ppopp}.
Whereas Ctrie added sequence numbers, Snap \textit{marks} each node to indicate that it should no longer by modified.
Updates are organized into \textit{epochs}, with each epoch represented by an object in memory containing a count of the number of active updates belonging to that epoch.
A snapshot marks the root node, ends the current epoch, and blocks further updates from starting until all updates in the current epoch finish.
Once updates are no longer blocked, they copy and mark each node they see whose parent is marked.
%
%
Like Ctrie, this pushes work from snapshots onto subsequent updates.  If these snapshots are used to compute small range queries, this may result in excessive duplication of unrelated parts of the structure.



If we view shared memory as a contiguous array, then our range queries are similar to \textit{partial snapshots}.
We implement two optimizations specific to our data structure.
First, when we traverse the tree to perform our initial \func{Collect}, we need only read a pointer to each \textit{leaf} that contains a key in the desired range (rather than reading each key).
This is a significant optimization when $k$ is large, e.g., 64.
Second, in the absence of contention, instead of performing a second \func{Collect}, 
we can simply check the \textit{tag} field of each leaf found by the first \func{Collect}.
As a further optimization, range queries can return a sequence of leaves, rather than copying their keys into an auxiliary data structure.
\\




\noindent Contributions of this work:
\begin{enumerate}[$\bullet$]
	\item{
		We present a new, provably correct data structure, and demonstrate experimentally that, for two very different sizes of range queries, it significantly outperforms data structures offering $O(1)$ time snapshots.  In many cases, it even outperforms a non-blocking skip-list, whose range queries cannot always be linearized.
	
	}
	\item{
		We contribute to a better understanding of the performance limitations of the $O(1)$ time snapshot technique for this application.
	}
\end{enumerate}


The rest of the paper is structured as follows.
In Section~\ref{sec-kst}, we describe the data structure, how updates are performed,
and our technique for computing partial snapshots of the nodes of the tree.
We give the details of how range queries are computed from these partial snapshots in Section~\ref{sec-code}.
Correctness and progress are proved in
Section~\ref{sec-correctness}.
Experimental results are presented in Section~\ref{sec-exp}.
Finally, we conclude in Section~\ref{sec-future}.

\section{Basic Operations} \label{sec-kst}

The \kst\ is a linearizable, leaf-oriented search tree in which each internal node has $k$ children and each leaf contains up to $k-1$ keys.
Its non-blocking operations, \find, \ins\ and \del,
implement the set ADT.
\func{Find}$(key)$ returns \true\ if $key$ is in the set, and \false\ otherwise. 
If $key$ is not already in the set, then \func{Insert}$(key)$ adds $key$ and returns \true. Otherwise it returns \false. 
If $key$ is in the set, then \func{Delete}$(key)$ removes $key$ and returns \true. Otherwise it returns \false. 
%
%
The \kst\ can be extended to implement the dictionary ADT,
in which a value is associated with each key
(described in the technical report for \cite{BH11:opodis}).
Although a leaf-oriented tree occupies more space
than a node-oriented tree (since it contains up to twice as many nodes),
the expected depth of a key or value is only marginally higher,
since more than half of the nodes are leaves.
Additionally, the fact that all updates in a leaf-oriented tree
can be performed by a local change near the leaves dramatically
simplifies the task of proving that updates do not interfere with one another.

\subsubsection*{Basic \kst\ operations}

\begin{figure}[t!]
\begin{center}
	\newcommand{\figsize}{0.68}
	\begin{minipage}{0.48\textwidth}
		\centering
		\noindent\textbf{Sprouting Insertion} \\
		\begin{tikzpicture}[node distance=1.2cm,>=stealth',bend angle=10,auto]
		  \begin{scope}
		  	\def\dist{0.5}
		  	\def\pad{0.5*\dist}
		  	\def\spaces{6}
		  	\def\offy{0}
			\foreach \xa/\ya/\xb/\yb/\p in {%
				1*\dist+\spaces/+0/-1*\dist+\pad+\spaces/-2/-1,%
				2*\dist+\spaces/+0/+1*\dist+0.333*\pad+\spaces/-2/-1,%
				3*\dist+\spaces/+0/+3*\dist-0.333*\pad+\spaces/-2/-1,%
				3*\dist+\spaces/+0/+5*\dist-\pad+\spaces/-2/+1,%
				-4*\dist/+2/+0/+0/+1
			} {
				\draw [-] (\xa*\dist+\p*\pad,\ya*\dist-\pad+\offy) --
						  (\xb*\dist,\yb*\dist+\pad+\offy);
			}
			\draw [-] (\spaces-\dist-4*\dist*\dist+1*\pad,2*\dist-\pad+\offy)
				edge [->, bend right] (\spaces-8*\dist*\dist,0*\dist+\pad+\offy);
			
		  	\foreach \num/\x/\y/\type in {%
		  		  /-5*\dist/+2/square,%
		  		  /-3*\dist/+2/square,%
		  		  /-1*\dist/+2/square,%
		  		  /\spaces-\dist-5*\dist/+2/square,%
		  		  /\spaces-\dist-3*\dist/+2/square,%
		  		  /\spaces-\dist-1*\dist/+2/square,%
		  		 a/\spaces-03*\dist/+0/ghost square,%
		  		 c/\spaces-01*\dist/+0/ghost square,%
		  		 d/\spaces+01*\dist/+0/ghost square,%
		  		 a/-01/+0/square,%
		  		 c/+00/+0/square,%
		  		 d/+01/+0/square,%
		  		 b/1*\dist+\spaces/+0/square,%
		  		 c/2*\dist+\spaces/+0/square,%
		  		 d/3*\dist+\spaces/+0/square,%
		  		 a/-1*\dist+\pad+\spaces/-2/square,%
		  		 b/+1*\dist+0.333*\pad+\spaces/-2/square,%
		  		 c/+3*\dist-0.333*\pad+\spaces/-2/square,%
		  		 d/+5*\dist-\pad+\spaces/-2/square%
		  	} {
				\draw (\x*\dist,\y*\dist+\offy) node [\type] {\num};
		  	}
		  	\draw [-to,thick,snake=snake,segment amplitude = 0.4mm,%
		  		segment length=2mm,line after snake=1mm]
		  		(3*\dist,\offy) -- node[text centered,xshift=0cm]
		  		{\insertop$(b)$} (-\dist+\dist*\spaces,\offy);
		  \end{scope}
		\end{tikzpicture}

        \vspace{3mm}
		\noindent\textbf{Simple Insertion} \\
		\begin{tikzpicture}[node distance=1.2cm,>=stealth',bend angle=10,auto]
		  \begin{scope}
		  	\def\dist{0.5}
		  	\def\pad{0.5*\dist}
		  	\def\spaces{6}
		  	\def\offy{-2.7}
			\foreach \xa/\ya/\xb/\yb/\p in {%
				0*\dist/+2/1*\dist/0/-1
			} {
				\draw [-] (\xa*\dist+\p*\pad,\ya*\dist-\pad+\offy) --
						  (\xb*\dist,\yb*\dist+\pad+\offy);
			}
			\draw [-] (3.5*\dist+\spaces*\dist+-1*\pad,2*\dist-\pad+\offy)
				edge [->, bend right] (2*\dist+\spaces*\dist,0*\dist+\pad+\offy);
			
		  	\foreach \num/\x/\y/\type in {%
		  		  /-3*\dist/+2/square,%
		  		  /-1*\dist/+2/square,%
		  		  /+1*\dist/+2/square,%
		  		  /+2*\dist+\spaces/+2/square,%
		  		  /+3*\dist+\spaces/+2/square,%
		  		  /+4*\dist+\spaces/+2/square,%
		  		 a/+4.5*\dist+\spaces/+0/ghost square,%
		  		 c/+5.5*\dist+\spaces/+0/ghost square,%
		  		 a/+0/+0/square,%
		  		 c/+1/+0/square,%
		  		 a/1*\dist+\spaces/+0/square,%
		  		 b/2*\dist+\spaces/+0/square,%
		  		 c/3*\dist+\spaces/+0/square%
		  	} {
				\draw (\x*\dist,\y*\dist+\offy) node [\type] {\num};
		  	}
		  	\draw [-to,thick,snake=snake,segment amplitude = 0.4mm,%
		  		segment length=2mm,line after snake=1mm]
		  		(3*\dist,\offy+\dist) -- node[text centered,xshift=0cm]
		  		{\insertop$(b)$} (-\dist+\dist*\spaces,\offy+\dist);
		  \end{scope}
		\end{tikzpicture}
	\end{minipage}
	\hspace{0.02\textwidth}
	\begin{minipage}{0.48\textwidth}
		\centering
		\noindent\textbf{Pruning Deletion} \\
		\begin{tikzpicture}[node distance=1.2cm,>=stealth',bend angle=30,auto]
		  \begin{scope}
		  	\def\dist{0.5}
		  	\def\pad{0.25}
		  	\def\spaces{3}
		  	\def\offy{0}
			\foreach \xa/\ya/\xb/\yb/\p\type in {%
				+00/+0/\pad-2/-2/-1/every path,%
				+01/+0/+00/-2/-1/every path,%
				+02/+0/-\pad+2/-2/-1/every path,%
				+02/+0/-\pad+4/-2/+1/every path,%
				4*\dist/+2/+2*\dist/+0/+1/every path,%
				\spaces+7*\dist/+0/\spaces+4*\dist/-2/-1/ghost path,%
				\spaces+9*\dist/+0/\spaces+7*\dist/-2/-1/ghost path,%
				\spaces+11*\dist/+0/\spaces+10*\dist/-2/-1/ghost path,%
				\spaces+13*\dist/+0/\spaces+14*\dist/-2/-1/ghost path%
			} {
				\draw [-] (\xa*\dist+\p*\pad,\ya*\dist-\pad+\offy) edge [\type]
						  (\xb*\dist,\yb*\dist+\pad+\offy);
			}
			
		  	\foreach \num/\x/\y/\type in {%
		  		  /+3*\dist/+2/square,%
		  		  /+5*\dist/+2/square,%
		  		  /+7*\dist/+2/square,%
		  		  /\spaces+10*\dist/+2/square,%
		  		  /\spaces+12*\dist/+2/square,%
		  		  /\spaces+14*\dist/+2/square,%
		  		 b/+00/+0/square,%
		  		 c/+01/+0/square,%
		  		 d/+02/+0/square,%
		  		 /\pad-2/-2/square,%
		  		 b/+00/-2/square,%
		  		 /-\pad+2/-2/square,%
		  		 e/-\pad-0.5*\dist+4/-2/square,%
		  		 f/-\pad+0.5*\dist+4/-2/square,%
		  		 b/\spaces+0.0+7*\dist/+0/ghost square,%
		  		 c/\spaces+0.5+7*\dist/+0/ghost square,%
		  		 d/\spaces+1.0+7*\dist/+0/ghost square,%
		  		 /\spaces+\pad+3*\dist/-2/ghost square,%
		  		 b/\spaces+7*\dist/-2/ghost square,%
		  		 /\spaces-\pad+11*\dist/-2/ghost square,%
		  		 e/\spaces-\pad-\pad+15*\dist/-2/square,%
		  		 f/\spaces+15*\dist/-2/square%
		  	} {
				\draw (\x*\dist,\y*\dist+\offy) node [\type] {\num};
		  	}
		
			\draw [-,bend angle=25] (\spaces+13*\dist*\dist+-1*\pad,1-\pad)
				edge [->,bend left] (\spaces+14*\dist*\dist,-1*\dist-\pad);
			
		  	\draw [-to,thick,snake=snake,segment amplitude = 0.4mm,%
		  		segment length=2mm,line after snake=1mm]
		  		(4.5*\dist,\offy) -- node[text centered,xshift=0cm]
		  		{\deleteop$(b)$} (\spaces+0.5*\dist,\offy);
		  \end{scope}
		\end{tikzpicture}

        \vspace{3mm}
		\noindent\textbf{Simple Deletion} \\
		\begin{tikzpicture}[node distance=1.2cm,>=stealth',bend angle=10,auto]
		  \begin{scope}
		  	\def\dist{0.5}
		  	\def\pad{0.5*\dist}
		  	\def\spaces{6}
		  	\def\offy{0}
			\foreach \xa/\ya/\xb/\yb/\p in {%
				0*\dist/+2/2*\dist/0/-1
			} {
				\draw [-] (\xa*\dist+\p*\pad,\ya*\dist-\pad+\offy) --
						  (\xb*\dist,\yb*\dist+\pad+\offy);
			}
			\draw [-] (3.5*\dist+\spaces*\dist+-1*\pad,2*\dist-\pad+\offy)
				edge [->, bend right] (\spaces-9*\dist*\dist,0*\dist+\pad+\offy);
			
		  	\foreach \num/\x/\y/\type in {%
		  		  /-3*\dist/+2/square,%
		  		  /-1*\dist/+2/square,%
		  		  /+1*\dist/+2/square,%
		  		  /+2*\dist+\spaces/+2/square,%
		  		  /+3*\dist+\spaces/+2/square,%
		  		  /+4*\dist+\spaces/+2/square,%
		  		 a/+3.5*\dist+\spaces/+0/ghost square,%
		  		 b/+4.5*\dist+\spaces/+0/ghost square,%
		  		 d/+5.5*\dist+\spaces/+0/ghost square,%
		  		 a/+00/+0/square,%
		  		 b/+01/+0/square,%
		  		 d/+02/+0/square,%
		  		 a/1*\dist+\spaces/+0/square,%
		  		 d/2*\dist+\spaces/+0/square%
		  	} {
				\draw (\x*\dist,\y*\dist) node [\type] {\num};
		  	}
		  	\draw [-to,thick,snake=snake,segment amplitude = 0.4mm,%
		  		segment length=2mm,line after snake=1mm]
		  		(3.5*\dist,\dist) -- node[text centered,xshift=0cm]
		  		{\deleteop$(b)$} (-0.5*\dist+\dist*\spaces,\dist);
		  \end{scope}
		\end{tikzpicture}
	\end{minipage}
\end{center}
\caption{the four \kst\ update operations.
	}
\label{fig-all-updates}
\end{figure}
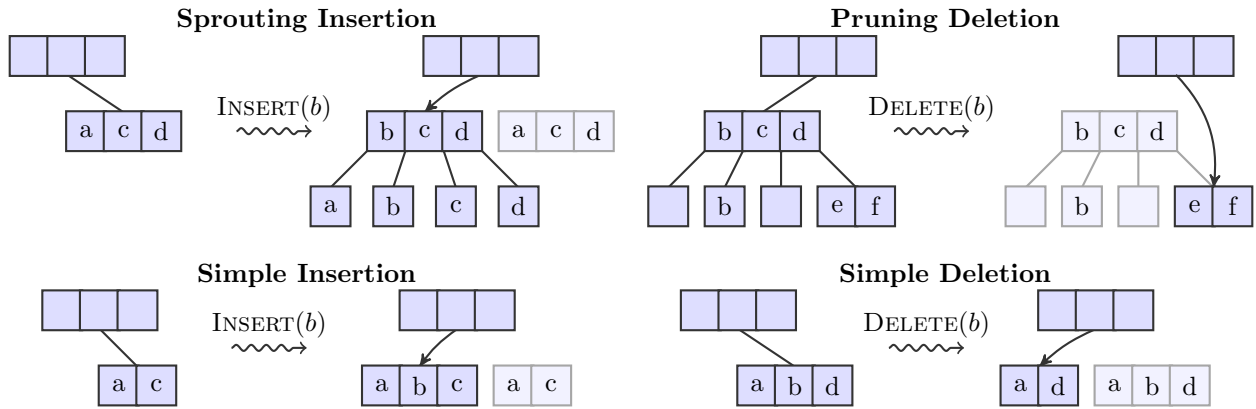

These operations are implemented as in the work of Brown and Helga~\cite{BH11:opodis}.
\find\ is conceptually simple, and extremely fast.
It 
traverses the tree just as it would in the sequential case.
However, concurrent updates can prevent its termination (see \cite{EFRB10:podc}).
\ins\ and \del\ are each split into two cases,
for a total of four update operations:
sprouting insertion, simple insertion, pruning deletion and simple deletion
(see Figure~\ref{fig-all-updates}). 
%
An \ins\ into leaf $l$ will be a sprouting insertion
when $l$ already has $k-1$ keys.
Since $l$ cannot accommodate any more keys,
it is atomically replaced (using CAS)
by a newly created sub-tree that contains
the original $k-1$ keys, as well as the key being inserted.
Otherwise, the \ins\ will be a simple insertion,
which will atomically replace $l$ with a newly created leaf,
containing the keys of $l$, as well as the key being inserted.
A \del\ from leaf $l$ will be a pruning deletion
if $l$ has one key and exactly one non-empty sibling.
Otherwise, the \del\ will be a simple deletion,
which will atomically replace $l$ with a newly created leaf,
containing the keys of $l$, except for the key being deleted.
Note that if $l$ has one key, then $l$ will be empty after a simple deletion.
With this set of updates, it is easy to see that the keys of a node never change,
and that internal nodes always have $k-1$ keys and $k$ children.

Non-blocking progress is obtained using a helping scheme that is generalized from the work of Ellen et~al. \cite{EFRB10:podc}, and is somewhat similar to the cooperative technique of Barnes \cite{Barnes}.
Every time a process $p$ performs an operation $O$, it stores information in the nodes it will modify to allow any other process to perform $O$ on $p$'s behalf.
Specifically, $p$ constructs a \textit{descriptor} $d$ for $O$ that describes how a process can perform $O$, and then invokes an idempotent \textsc{Help} procedure on $d$ to complete $O$.
This \textsc{Help} procedure stores a pointer to $d$ in each node that $O$ will modify.
These pointers are used to coordinate operations so they do not interfere with one another.
Once a node points to descriptor $d$, it will continue to point to $d$ until $O$ has completed.
As long as a node points to descriptor $d$, we say the node is \textit{protected} by $O$.
An operation can only modify nodes that are currently protected by it.
Therefore, another operation cannot modify a node that is protected by $O$ until $O$ has completed.
Whenever $p$ is prevented from making progress by another operation, $p$ helps that operation complete by invoking \textsc{Help} on its descriptor, and then retries its own operation.

We briefly consider which nodes must be protected by each operation to guarantee that operations do not interfere with one another.
Clearly, each operation must protect all nodes that it will modify.
Additionally, each operation must protect any internal nodes that it will remove from the tree.
Otherwise, for example, a new leaf might be inserted as a child of an internal node that is about to be removed (causing the leaf to be erroneously removed, as well).
Note that leaves never need to be protected, they only contain keys, which do not change.

\section{Range Queries in a \kst} \label{sec-code}

At a high level, our range query algorithm consists of one or more \textit{attempts}, each of which perform a \textit{collect} phase, followed by a \textit{validation} phase.
In the collect phase, a depth-first search (DFS) traverses the relevant part of the tree, and collects pointers to leaves.
Specifically, it collects a pointer to each leaf that currently contains a key in the specified range, and each leaf that a key in the specified range could be inserted into.
We call these leaves the \textit{collected leaves}.
Then, the validation phase checks whether all of the collected leaves were in the tree at the end of the collect phase.
If so, then, since the keys of nodes never change, the collected leaves contain precisely the keys that were in the specified key range at the end of the collect phase.
Otherwise, the keys in the specified range might have changed during the collect or validation phases, so another attempt is performed (i.e., the range query is restarted).

\subsubsection*{Tagging and validation}

\begin{figure}
\codesize
\begin{code}
	\firstline
		\func{SomeUpdate}$(args)$ \nl
	\n		... \nl
			\com removing/replacing a leaf $u$ \nl
			$u.tag := true$ \nl
			remove/replace $u$ \nl
			... \bl\nl
			
	\p  \rquery$(lo, hi)$ \nl
	\n		collect pointers $u_1, u_2, ..., u_n$ to relevant leaves \label{code-collected-start} \nl
			\com validation phase \nl
			for $i = 1..n$ \label{code-collected-validate} \nl
	\n			if $u_i.tag = true$ \tabcom{If an operation has replaced $u_i$, or is about to} \nl
	\n				goto line~8 \tabcom{Restart the range query} \nl
	\p\p	success
\p
\end{code}
\vspace{-4mm}
\caption{Naive approach for performing tagging and validation.}
\label{fig-naive}
\end{figure}

We begin by describing a naive approach for performing efficient validation that fails to guarantee non-blocking progress.
Since the keys of nodes never change, each operation \textit{replaces} any node whose keys it would modify.
Consequently, validation simply needs to check whether any leaf seen during the collect phase was replaced.
To facilitate this, each operation \textit{tags} leaves just before they are replaced.
So, if the validation phase sees that none of the collected leaves are tagged, then they were all in the tree at the end of the collect phase, and we say that validation \textit{succeeded}.
Otherwise, we say that validation \textit{failed}.

Figure~\ref{fig-collected} illustrates how tagging causes validation to fail.
Consider a range query that collects pointers to leaves containing $a$ and $b$ (which are not tagged), and then enters its validation phase.
Before it checks whether these leaves are tagged, \ins$(c)$ tags the leaf containing $b$.
After this, validation is doomed to fail. 
The \ins$(c)$ then changes its child pointer and finishes.

%
To implement tagging, each leaf is augmented with a \textit{tag} bit, which is irrevocably set before an operation removes the leaf from the tree, and validation simply checks the \textit{tag} bits of all collected leaves.
If no leaf is tagged, then validation succeeds.
Otherwise, it fails.
Pseudocode for this approach appears in Figure~\ref{fig-naive}.

It is fairly easy to see how this approach can cause the system to encounter livelock, which means that processes take steps infinitely often, but no operation ever successfully completes.
Consider an execution with two processes, $p$ and $q$.
Suppose $p$ performs an update $O$, and crashes after setting $u.tag := true$ for some leaf $u$, but before removing $u$ from the tree.
Then, $q$ performs a range query which collects a pointer to $u$.
Since $u.tag = true$, the range query will restart.
Furthermore, since $p$ is crashed, $q$ will continue to restart this range query forever.


\subsubsection*{Helping}

One way to ensure progress (for the system as a whole) is to have $q$ \textit{help} operation $O$ complete.
This is easy if $q$ has a pointer to the descriptor for $O$: $q$ can simply invoke the same \textsc{Help} procedure that $p$ was executing when it crashed.
The question is how to obtain a pointer to $O$'s descriptor.

To obtain a pointer to the descriptor for $O$, $q$ could try to find a node protected by $O$.
Since $O$ replaces $u$, it either changes a child pointer of $u$'s parent, or removed $u$'s parent.
In either case, it must protect $u$'s parent, so $u$'s parent will contain a pointer to $O$'s descriptor.
Thus, whenever $q$ collects a pointer to a leaf $v$, it could also save a pointer to $v$'s parent (allowing it to obtain a pointer to $O$'s descriptor later).

Another simple approach is change the \textit{tag} bit into a pointer.
Each time an operation removes a leaf $v$, it must first store a pointer to its descriptor in $v.tag$.
This makes the descriptor of the operation that will remove a leaf immediately available to any process that has a pointer to the leaf.
This approach is more efficient, because it does not add overhead during the collect phase of range queries (and writing a pointer is just as fast as writing a single bit on modern systems).

\begin{figure}[bt]
	\centering
	\includegraphics[width=\textwidth]{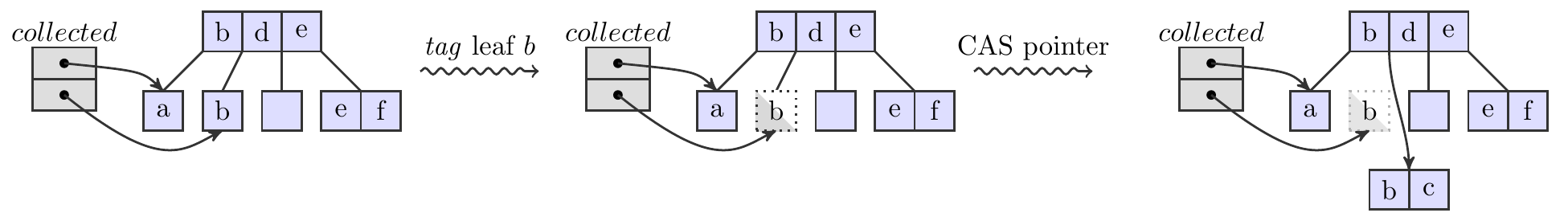}
	\vspace{-4mm}
	\caption{
		Example of how tagging a leaf causes the validation phase of a range query to fail.
	}
	\label{fig-collected}
\end{figure}

Unfortunately, forcing queries to help when they encounter other operations can increase contention in the system.
Hence, our range queries avoid helping.
Luckily, there are other ways to guarantee that the system makes progress without having range queries help other operations.

\subsubsection*{Double checking}

In our algorithm, if a range query sees a node whose \textit{tag} bit is set, it performs some extra work to determine whether the node has actually been replaced (or has simply been tagged by a slow or crashed process).
In this situation, the range query remembers the set of collected nodes 
and then traverses the data structure a second time.
If, after this second traversal, the set of nodes it collects are the same as before, 
then there is a time between the two collects when the nodes collected by both traversals were all in the tree.
If not, then a node was replaced, which implies that another operation completed (so the system is still making progress, as a whole).

This approach is similar to the traditional double-\textsc{Collect} approach, but it offers significant optimizations.
Tagging provides an optimistic way to avoid performing a second traversal of the data structure in the uncontended case.
This reduces the number of nodes that must be accessed by validation, and eliminates the overhead associated with the actual data structure traversal.
Additionally, unlike the double-\textsc{Collect} approach, which performs value-based validation on all keys in the collected nodes, our approach only checks that the set of collected \textit{leaves} does not change.
This greatly reduces overhead in trees with large nodes, containing, e.g., 64 keys.

\subsubsection*{Implementation}

\begin{figure}[ht!]
	{
	
	\codesize
	\begin{minipage}{0.3\textwidth}
		\centering
		\begin{code}
			\firstline
			type \node\ \nl
			\n	 \key\  $\cup\ \{\infty\}$ \ $a_1, ..., a_{k-1}$ \ul
			\p			
		\end{code}%
	\end{minipage}
	\hspace{0.02\textwidth}
	\begin{minipage}{0.3\textwidth}
		\centering
		\begin{code}
			\firstline
			subtype \internal\ of \node\ \nl
			\n	\node\ \ $c_1, ..., c_k$ \ul
			\p
		\end{code}%
	\end{minipage}
	\hspace{0.02\textwidth}
	\begin{minipage}{0.3\textwidth}
		\centering
		\begin{code}
			\firstline
			subtype \leaf\ of \node\ \nl
			\n	
				\info\ \ $tag$ \ul
				\com (initially \nul)
			\p
		\end{code}
	\end{minipage}
		
	\codesize
	\begin{code}
	\firstline
	\func{RangeQuery}(\key\ $lo$, \key\ $hi$) :
				returns List of \node s \nl
	\n		\com \textit{Precondition: $lo, hi \neq \infty$, and $lo \le hi$} \nl
   			Stack $s :=$ new Stack() \nl
            List $collected := null$\nl
            List $prevCollected := null$\bl\nl
	
    \p retry: \label{code-rq-attempt} \nl
	\n		\com DFS traversal to populate $collected$ with \textit{all} leaves that could possibly contain a key in $[lo, hi]$ \nl
            $prevCollected := collected$ \label{code-rq-start} \nl
            $collected :=$ new List() \label{code-rq-declare-collected} \nl
            $s.clear()$ \nl
			$s.push(root.c_1)$ \bl\nl
			
			while $|s| > 0$ \nl
	\n			\node\ $u := s.pop()$ \label{code-rq-pop} \nl
				if $u$ is a \leaf\ then do $collected.append(u)$ \label{code-rq-add-to-collected} \bl\nl
				
				\com Determine which children of $u$ to traverse \nl
				int $l := 1$ \nl
				int $r := k$ \bl\nl
				
				\com Find the rightmost subtree that could contain a key in $[lo, hi]$ \nl
				while $r > 1$ and $hi < u.a_{r-1}$ \label{code-rq-skip-hi} \nl
	\n				$r := r - 1$ \bl\nl
	
	\p			\com Find the leftmost subtree that could contain a key in $[lo, hi]$ \nl
				while $l < k$ and $lo \ge u.a_l$ \label{code-rq-skip-lo} \nl
	\n				$l := l + 1$ \bl\nl
	
	\p			\com perform DFS left to right (so push onto stack from right to left) \nl
				for $i = r .. l$ (iterate backwards from $r$ down to $l$)\label{code-rq-loop-push} \nl
	\n				$s.push(u.c_i)$ \label{code-rq-push} \label{code-rq-traverse-end} \bl\nl
	
	\p\p	\com Validate \nl
            if $prevCollected = null$ then \label{code-rq-validate-start} \nl
    \n          \com Optimistically validate using tags for the \textit{first} traversal \nl
			    if any node in $collected$ is \textit{tagged} then goto retry \label{code-rq-retry1} \nl
    \p  else \nl
    \n          \com Pessimistically validate using double collect for subsequent traversals \nl
                if $collected$ and $prevCollected$ have different contents then goto retry \label{code-rq-retry2} \nl\nl
		
	\p      \com Return all leaves in $collected$ that contain some key in range $[lo,hi]$ \nl
			List $result :=$ new List() \label{code-rq-collect-start}  \nl
			for each $u$ in $collected$ \label{code-rq-loop2} \nl
	\n		if at least one of $u$'s keys is in range $[lo, hi]$ then do $result.append(u)$
			\label{code-rq-collect-end} \bl\nl
			
	\p		return $result$ \label{code-rq-return}
    \p
	\end{code}
	}
	\vspace{-4mm}
	\caption{Pseudocode for \func{RangeQuery}.
			\func{RangeQuery} takes keys $lo$ and $hi$ as arguments
			and returns a list of pointers to all leaves that
				(a) were in the tree at the linearization point, and
				(b) have a key in $[lo, hi]$.
			}%
	\label{code0}
\end{figure}


\paragraph{Range query operations}
Java-like pseudocode for
the \rquery\ operation is given in Figure~\ref{code0}.
We borrow the concept of a \textit{reference} type from Java.
In this psuedocode, variables of any type $E \notin \{$int, boolean$\}$
are references to objects of type $E$.
A reference $x$ is like pointer,
but is dereferenced when a field of the object
is accessed with the ($.$) operator, as in: $x.field$
(which means the same as \texttt{x->field} in C).
References take on the special value \nul\
when they do not point to any object.
(However, no field of any node is ever \nul.)

A \rquery\ operation $R$ starts by declaring two lists, $collected$ and $prevCollected$, to hold pointers to leaves collected in the current traversal, and in the previous traversal (if the traversal is restarted).
Then, it begins its first \textit{attempt} at line~25.
It sets $prevCollected := collected$, then sets $collected$ to a newly created list.
(In the first attempt, this effectively sets $prevCollected := null$. In each subsequent attempt, setting $prevCollected := collected$ saves the leaves found during the previous attempt's collect phase.)
The $collected$ list is then populated by the loop at lines~27-45.
This loop implements depth-first-search (DFS) using a stack (instead of recursion).
For efficiency, it does not traverse subtrees that cannot contain any key in $[lo,hi]$.
The loop at line~38 skips subtrees that contain only keys strictly greater than $hi$.
The loop at line~41 skips subtrees that contain only keys strictly less than $lo$.
Both of these loops use the fact that keys are maintained in increasing order within each node.
The loop at line~44 then pushes the roots of all remaining subtrees onto the stack.
All paths that could lead to keys in $[lo,hi]$ are explored, and all terminal leaves on these paths are placed in $collected$ at line~33.

After the collect phase, $R$ enters a validation phase at line~47.
In the first attempt, $R$ optimistically uses the \textit{tag} bits of collected leaves to determine whether any of them changed during the collect phase.
If no collected leaf is tagged, then \textit{no} collected leaf changed between when it was visited during the collect phase, and when its \textit{tag} bit was checked, and we say that validation is successful.
In this case, we linearize $R$ at the start of the validation phase.
Note that 
all collected leaves were in the tree, and hence represented an atomic snapshot of the keys in the range $[lo, hi]$, when $R$ is linearized.
Otherwise, if some collected leaf is tagged, then $R$ starts another attempt (returning to line~27).
In each attempt after the first, $R$ uses the classical double-collect approach to validate collected leaves, instead of using \textit{tag} bits.
That is, $R$ checks whether $collected$ contains the same leaves as were found by the previous collect phase (in $prevCollected$).
If so, we linearize $R$ between these two collect phases, and we say that validation is successful.
Otherwise, $R$ starts yet another attempt.

If validation is successful, then the loop at lines~56-57 fills a new list, $result$, with pointers to each collected leaf that contains at least one key in $[lo,hi]$.
Since the keys of nodes never change, the range query can simply return pointers to the leaves that contain the keys in the range $[lo, hi]$ (saving the time needed to copy the keys to a separate list).


\newcommand{\helpop}{op}
\begin{figure}[bt]
	\codesize
	\begin{minipage}{0.49\textwidth}
		\centering
		\begin{code}
			\firstline
			\com Type definitions: \nl
			type \node\ \{ \nl
			\n	 final \key\  $\cup\ \{\infty\}$ \ $a_1, ..., a_{k-1}$ \nl
			\p\}
			\nl
			subtype \leaf\ of \node\ \{ 
					\nl
			\n	final int $keyCount$ 
	\end{code}
	\vspace{-4mm}
	
	\begin{graybox}
	\begin{minipage}{\textwidth}
	\codesize
	\begin{code}
	\firstline
	\n	boolean $tag$ \com (initially \false) 
	\end{code}
	\end{minipage}
	\end{graybox}
	\vspace{-6.5mm}
	
	\codesize
	\begin{code}
	\firstline
			\} \p\p \nl
			subtype \internal\ of \node\ \{ 
					\nl
			\n	\node\ \ $c_1, ..., c_k$ \nl
				\info\ \ $\fldinfo$ \ul
				\com (initially a new \clean() object)
						\nl
			\p\}
			\bl\nl
			type \info\ \{ \}\nl
			subtype \iinfo\ of \info\ \{ 
					\nl
			\n	final \node\ \ $l$, $p$, $newChild$\nl
				final int $pindex$\nl
			\p\}
		\end{code}%
	\end{minipage}
	\hspace{0.02\textwidth}
	\begin{minipage}{0.49\textwidth}
		\centering
		\begin{code}
			\firstline
			subtype \dinfo\ of \info\ \{ 
					\nl
			\n	final \node\ \ $l$, $p$, $gp$\nl
				final \info\ \ $p\fldinfo$\nl
				final int $gpindex$\nl
			\p\}
			\nl
			subtype \mk\ of \info\ \{ 
					\nl
			\n final \dinfo\ \ $\fldinfo$\nl
			\p\}
			\nl
			subtype \clean\ of \info\ \{ \} 
					\bl\nl
			\com Initialization:\nl
			shared Internal $root$ \ul
			\label{root-init}
			\ul
			\ul
			\ul
			\ul
			\ul
		\end{code}
	\end{minipage}
	\caption{Type definitions and initialization.}
	\label{code1}
\end{figure}

	\begin{figure}[p!]
	{
	\codesize
	\begin{code}
	\firstline
	\func{Search}(\key\ $key$) :
				$\langle \mbox{\internal}, \mbox{\internal}, \mbox{\leaf}, \mbox{\info}, \mbox{\info}\rangle$ \ul
	\n		\com Used by \func{Insert}, \func{Delete} and \func{Find} to traverse the \kst \ul
			\com \func{Search} satisfies following \textit{postconditions:} \ul
			\com (1) $leaf$ points to a Leaf node, and $parent$ and $gparent$ point to Internal nodes\ul
			\com (2) $parent.c_{pindex}$ has contained $leaf$,
					and $gparent.c_{gpindex}$ has contained $parent$ \ul
			\com (3) $parent.\fldinfo$ has contained $p\fldinfo$, \ul
	\n\n				and $gparent.\fldinfo$ has contained $gp\fldinfo$ \nl
	\p\p   \node\ $gparent$, $parent := root$, $leaf := parent.c1$\label{search-init} \nl 
	       \info\ $gp\fldinfo, p\fldinfo := parent.\fldinfo$\nl
	       int $gpindex$, $pindex := 1$
	\nl
	       while $type(leaf) =$ \internal\ \label{search-loop}
	       	\> \hspace{-1cm} \com Save details for parent and grandparent of leaf \nl
	\n	   $gparent := parent$; $gp\fldinfo := p\fldinfo$\nl
		   $parent := leaf$; $p\fldinfo := parent.\fldinfo$ \label{search-parent}\nl
		   $gpindex := pindex$\nl
		   $\langle leaf, pindex \rangle := \langle $appropriate child of $parent$
			  	by the search tree property%
				,\ul
	\n\n\n\n\n	index such that $parent.c_{pindex}$ is read and stored in $leaf \rangle$ \label{search-child} \nl

	\p\p\p\p\p\p return $\langle gparent, parent, leaf, p\fldinfo, gp\fldinfo, pindex, gpindex \rangle$ \label{search-return} \bl\nl
	\p
	\func{Find}(\key\ $key$) : boolean \nl
	\n   if \leaf\ returned by $\func{Search}(key)$ contains $key$, then return \true, else return \false \label{find-key} \bl\nl
	\p
	\func{Insert}(\key\ $key$) : boolean \nl
	\n \node\ $p$, $newChild$\nl 
	   \leaf\ $l$ \nl
	   \info\ $p\fldinfo$ \nl
	   int $pindex$
	   \nl
	   while \TRUE\ \nl
	   \n $\langle -, p, l, p\fldinfo, -, pindex, - \rangle := \func{Search}(key)$
	   		\label{insert-search} \nl
	      if $l$ already contains $key$ then return \FALSE\ \label{insert-false} \nl
	      if $\typeof(p\fldinfo) \neq$ \clean\ then \label{insert-cleanchk} \nl 
	\n		  \help$(p\fldinfo)$
				\tabcom Help the operation pending on $p$ \label{call-help1} \nl
	\p    else \nl
	\n        if $l$ contains $k-1$ keys \tabcom \textbf{\Icomplex}
						\label{insert-create-start} \nl 
	\n			  $newChild :=$ new \internal\ node with $\fldinfo := new\ \clean()$,\ul
	\n				and with the $k-1$ largest keys in
					$S = \{ key \}$ $\cup$ keys of $l$,\ul
				  	and $k$ new children, sorted by keys, each having
					one key from $S$ \label{newnode1} \nl
	\p\p	  else \tabcom \textbf{\Isimple} \nl
	\n			  $newChild :=$ new \leaf\ node with keys:
					$\{ key \}$ $\cup$ keys of $l$ \label{newnode2}
	\p	  	  \label{insert-create-end} \bl\nl 
		  \iinfo\ $op :=$ new \iinfo$(l, p, newChild, pindex)$ \label{insert-newflag} \nl
		  boolean $result := \CAS(p.\fldinfo, p\fldinfo, op)$ \label{insert-cas}
		  		\tabtabcom \textbf{\ifcas} \bl\nl
		  if $result$ then \tabcom \ifcas\ succeeded \nl
	\n	      \helpinsert$(op)$ \tabcom Finish the insertion
					\label{insert-helpinsert} \nl
		      return \TRUE\ \label{insert-true} \nl
	\p    else \tabcom \ifcas\ failed \nl
	\n		  \help$(p.\fldinfo)$ \tabcom Help the operation pending on $p$
			  		\label{ins-help-after-failure} \label{call-help2}
	\p\p\p\p\bl\nl
	\help(\info\ $\helpop$) \ul
	\n	\com \textit{Precondition:} $op \neq \nul$
		has appeared in $x.\fldinfo$ for some internal node $x$ \nl
	      if $\typeof(\helpop) =$ \iinfo\ then \helpinsert$(\helpop)$ \label{call-HelpInsert}\nl
	      else if $\typeof(\helpop) =$ \dinfo\ then \helpdelete$(\helpop)$ \label{call-HelpDelete}\nl
	      else if $\typeof(\helpop) =$ \mk\ then \helpmarked$(\helpop.\fldinfo)$ \label{call-hm2}
	\p
	\end{code}
	}
	\vspace{-4mm} 
	\caption{Pseudocode for \func{Search}, \func{Find}, \func{Insert} and \func{Help}
			reproduced, without modification, from \cite{BH11:york}.}%
	\label{code2}
	\end{figure}

	\begin{figure}[p!]
	{
	\codesize
	\begin{code}
	\firstline
	\func{Delete}(\key\ $key$) : boolean \nl
	\n \node\ $gp$, $p$ \nl
	   \info\ $gp\fldinfo, p\fldinfo$ \nl
	   \leaf\ $l$ \nl
	   int $pindex$, $gpindex$
	   \nl
	   while \TRUE\ \nl
	\n     $\langle gp, p, l, p\fldinfo, gp\fldinfo, pindex, gpindex \rangle := \func{Search}(key)$
				\label{delete-search} \nl
	       if $l$ does not contain $key$, then return \FALSE\ \label{delete-false} \nl
	       if $\typeof(gp\fldinfo) \neq$ \clean\ then \label{delete-gpcleanchk} \nl
	\n			\help$(gp\fldinfo)$ \label{del-help-unclean-1}
			  	\tabcom Help the operation pending on $gp$ \label{call-help3} \nl
	\p     else if $\typeof(p\fldinfo) \neq$ \clean\ then \label{delete-pcleanchk} \nl
	\n			\help$(p\fldinfo)$ \label{del-help-unclean-2}
				\tabcom Help the operation pending on $p$ \label{call-help4} \nl
	\p     else
					\tabcom Try to flag $gp$\nl
	\n			int $ccount :=$ number of non-empty children of $p$
					(by checking them in sequence) \label{delete-ccount} 
				\nl
				if $ccount = 2$ and $l$ has one key then \tabcom \textbf{\Dcomplex}
						\label{delete-complex} \nl
	\n				\dinfo\ $op :=$ new \dinfo$(l, p, gp, p\fldinfo, gpindex)$
							\label{delete-newcomplexflag} \nl 
					boolean $result = \CAS(gp.\fldinfo, gp\fldinfo, op)$
							\label{delete-complexcas}
							\tabtabcom \textbf{\dfcas}
					\nl 
					if $result$ then
						\tabcom \dfcas\ successful--now delete or unflag \nl
	\n					if \helpdelete(op) then return \TRUE; \label{delete-helpcomplex} \nl 
	\p				else
						\tabcom \dfcas\ failed \nl
	\n					\help$(gp.\fldinfo)$ \tabcom Help the operation pending on $gp$
								\label{call-help5}
	\p				\nl
	\p			else \tabcom{\textbf{\Dsimple}}
							\label{delete-simple} \nl
	\n				\node\ $newChild :=$ new copy of $l$ with $key$ removed \label{newnode3} \nl
			    	\iinfo\ $op :=$ new \iinfo$(l, p, newChild, pindex)$
			    			\label{delete-newsimpleflag} \nl 
					boolean $result := \CAS(p.\fldinfo, p\fldinfo, op)$
						\label{iflag-cas} \label{delete-simplecas}
						\tabtabcom \textbf{\ifcas}
					\nl 
					if $result$ then \tabcom \ifcas\ succeeded \nl
	\n		    		\helpinsert$(op)$ \tabcom Finish inserting the replacement leaf
								\label{delete-helpsimple} \nl
			    		return \TRUE\ \label{delete-true} \nl
	\p              else \tabcom \ifcas\ failed \nl
	\n				\help$(p.\fldinfo)$ \tabcom Help the operation pending on $p$
					\label{call-help6} \label{del-help-after-failure}
	\p\p\p\p\p\p
	\bl
	\nl
	\n\helpdelete(\dinfo\ $op$) : boolean
			\tabcom {\it Precondition}:  $op$ is not \nul \nl
	\n		boolean $result := \CAS(op.p.\fldinfo, op.p\fldinfo,$ new \mk$(op))$
					\tabtabcom \textbf{\mcas}
					\label{helpdelete-markcas}\nl
			\info\ $newValue := op.p.\fldinfo$ \label{helpdelete-readinfo}\nl
			if $result$ or $newValue$ is a \mk\ with $newValue.\fldinfo = op$ then
					\label{helpdelete-markchk} 
				\nl
	\n			\helpmarked$(op)$\label{helpdelete-helpmarked}
					\tabcom Marking successful--complete the deletion\nl
		    return \TRUE \nl
	\p      else \tabcom Marking failed \nl
	\n              \help$(newValue)$ \label{helpdelete-helpother}
						\tabcom Help the operation pending on $p$ \label{call-help7} \nl
			\CAS$(op.gp.\fldinfo, op,$ new \clean$())$\label{helpdelete-backtrack}
						\tabcom Unflag $op.gp$
						\> \com \textbf{\bcas} \nl
				return \FALSE
	\p\p
	\bl
	\nl
	\helpinsert(\iinfo\ $op$) \tabcom {\it Precondition}:  $op$ is not \nul 
	\n		
	\end{code}
	\vspace{-4mm}
	
	\begin{graybox}
	\begin{minipage}{\textwidth}
	\codesize
	\begin{code}
	\firstline
	\n	$op.l.tag := \true$ \p \label{code-helpinsert-tag} 
	\end{code}
	\end{minipage}
	\end{graybox}
	\vspace{-6.5mm}
	
	\codesize
	\begin{code}
	\firstline
	\n		\cas$(op.p.c_{op.pindex}, op.l, op.newChild)$
					\tabcom Replace $l$ by $newChild$
					\> \com \textbf{\ichildcas} \label{helpinsert-caschild}\nl
	\p		\cas$(op.p.\fldinfo, op,$ new \clean$())$
					\tabcom Unflag $p$
					\> \com \textbf{\iucas} \label{helpinsert-unflag}\nl
	\p
	\bl
	\nl
	\helpmarked(\dinfo\ $op$) \tabcom {\it Precondition}:  $op$ is not \nul \nl
	\n	\node\ $other :=$ any non-empty child of $op.p$ different from $op.l$ \ul
	\n		(found by visiting each child of $op.p$),
			or $op.p.c_1$ if none found \label{helpmarked-child} 
	\p\p
	\end{code}
	\vspace{-4mm}
	
	\begin{graybox}
	\begin{minipage}{\textwidth}
	\codesize
	\begin{code}
	\firstline
	\n for each child $u$ of $op.p$, $u \neq other$ \label{code-helpmarked-tag} \nl
	\n		$u.tag := \true$
	\p 
	\end{code}
	\end{minipage}
	\end{graybox}
	\vspace{-4mm}
	
	\codesize
	\begin{code}
	\firstline
	\n	\cas$(op.gp.c_{op.gpindex}, op.p, other)$
		     	\tabcom Replace $p$ by $other$
		     	\> \com \textbf{\dchildcas} \label{helpmarked-caschild}\nl
	\p	\cas$(op.gp.\fldinfo, op,$ new \clean$())$
		     	\tabcom Unflag $gp$
		     	\> \com \textbf{\ducas} \label{helpmarked-unflag}
	\p
	\p
	\end{code}
	}
	\vspace{-1mm} 
	\caption{Pseudocode for \func{Delete}, \helpdelete, \helpinsert\ and \helpmarked.
			\func{Delete} and \helpdelete\ are reproduced, without modification, from
			\cite{BH11:york}.  \helpinsert\ (\helpmarked) is modified to tag
			the leaf being replaced (deleted) just before
			performing the \ichildcas\ (\dchildcas) step.}
	\label{code3}
	\end{figure}

\paragraph{Modifications to $k$-ST operations}

The modifications to the $k$-ST operations are extremely simple.
Pseudocode appears in Figure~\ref{code1}, Figure~\ref{code2} and Figure~\ref{code3}.
Almost all of the code is unchanged from the work of Brown and Helga \cite{BH11:opodis}.
A total of five lines (highlighted in gray) are added to support range queries.
These lines add a $tag$ bit to each leaf, which is initially \false, and set the $tag$ of a leaf to \true\ 
just before an operation removes the leaf from the tree.

\section{Correctness} \label{sec-correctness}


We now give a simple proof of correctness.
We start by 
arguing that our modifications to the $k$-ST operations do not affect their correctness, 
and neither does the introduction of our \rquery\ operation.
%
Next, we show that range queries return correct (linearizable) results.
Finally, we show that the complete implementation supporting \func{Insert}, \func{Delete}, \func{Find} and \rquery\ is non-blocking.

\subsection{Correctness and progress of \func{Insert}, \func{Delete} and \func{Find}}

Since \func{Insert}, \func{Delete} and \func{Find} never read the $tag$ field of any node, our modifications do not affect correctness for these operations.
Furthermore, since \func{RangeQuery} does not modify shared memory, \rquery\ operations cannot affect the correctness of \func{Insert}, \func{Delete} and \func{Find}.
Thus, it remains only to prove that \rquery\ operations are correct, and that all operations are non-blocking. 

\subsection{Correctness of \rquery}

To prove the correctness of \rquery, we must precisely specify its linearization points, and demonstrate that each invocation of \func{RangeQuery}$(lo, hi)$ returns precisely the leaves in the tree that have a key in the range $[lo, hi]$ when it is linearized. 
We think of each \rquery\ as being divided up into \textit{attempts}, each of which begins at an execution of line~25.
Each attempt starts in a \textit{collect} phase and enters a \textit{validation} phase starting at line~47.
\\

\noindent\textbf{Linearization points}
\begin{compactitem}
\item Each invocation of \rquery\ that terminates after its first attempt is linearized at the start of its (only) validation phase.
\item Each invocation of \rquery\ that terminates after more than one attempt is linearized at the start of its last collect phase.
\end{compactitem}

\vspace{2mm}

Before we prove the main result, we first prove some supporting lemmas.

\begin{obs} \label{obs-nodes-immutable}
The keys of nodes do not change. 
To insert or delete a key, \func{Insert} and \func{Delete} \textbf{replace} one or more existing nodes with newly created nodes.
\end{obs}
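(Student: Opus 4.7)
The plan is to verify the observation by a direct inspection of the pseudocode in Figures~\ref{code1}, \ref{code2} and~\ref{code3}. The observation has two parts: (i) key fields are never overwritten, and (ii) whenever \func{Insert} or \func{Delete} needs to change the set of keys stored in a node, it does so by constructing a fresh node and splicing it in via CAS rather than mutating an existing one.

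For part~(i), I would first appeal to the type declarations in Figure~\ref{code1}: the fields $a_1,\dots,a_{k-1}$ of \node\ are declared \emph{final}, as are the $keyCount$ field of \leaf\ and the child array $c_1,\dots,c_k$ of \internal\ (where relevant). Since these fields are set by the constructor and never reassigned, any LaTeX-level check reduces to scanning all write sites in \func{Insert}, \func{Delete}, \helpinsert, \helpmarked, and \helpdelete. The only writes that appear are: (a) CAS's that target the $\fldinfo$ field of an \internal\ node (the \ifcas, \dfcas, \mcas, \iucas, \ducas, and \bcas\ steps), (b) the \ichildcas\ and \dchildcas\ steps that target a single child pointer $c_{i}$ of an \internal\ node, and (c) the assignment $u.tag := \true$ introduced on lines~\ref{code-helpinsert-tag} and~\ref{code-helpmarked-tag}. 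None of these touches an $a_i$, so key fields are immutable.

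For part~(ii), I would walk through each of the four update cases in turn, using Figure~\ref{fig-all-updates} as a guide. \Isimple\ allocates a new \leaf\ at line~\ref{newnode2} containing the old keys plus the new one, and installs it by \ichildcas\ at line~\ref{helpinsert-caschild}; the original leaf $l$ is unchanged in its keys. \Icomplex\ allocates a new \internal\ subtree at line~\ref{newnode1} and splices it in the same way. \Dsimple\ allocates a new \leaf\ at line~\ref{newnode3} and again splices it via \ichildcas. \Dcomplex\ does not allocate a new node for the surviving sibling; rather, \helpmarked\ promotes an existing child $other$ via \dchildcas\ at line~\ref{helpmarked-caschild}, which still leaves keys untouched. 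In every case, the effect on the set of stored keys is realized by replacing nodes wholesale, not by in-place key modification.

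The only ``subtlety'' is making sure I have enumerated every write site; the rest is mechanical given that keys are \emph{final}. I expect the main obstacle to be nothing more than the bookkeeping of confirming that the newly introduced tag-write on lines~\ref{code-helpinsert-tag} and~\ref{code-helpmarked-tag} does not secretly touch a key field (it does not: it writes only the Boolean $tag$). With these checks in place, both clauses of the observation follow immediately, and we may freely use in later arguments the fact that a pointer to a node obtained at any time continues to identify a node with exactly the same keys.
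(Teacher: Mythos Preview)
Your argument is correct and is the natural way to justify the claim. Note, however, that the paper offers no proof at all: this is stated as a bare \emph{Observation}, justified implicitly by the \texttt{final} declaration on $a_1,\dots,a_{k-1}$ in Figure~\ref{code1} and the fact that every update in Figure~\ref{fig-all-updates} works by node replacement. Your write-up is therefore considerably more detailed than what the paper deems necessary, but that is not a fault.

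One small slip: you say the child array $c_1,\dots,c_k$ of an \internal\ is declared final, but in Figure~\ref{code1} it is not (and indeed cannot be, since \ichildcas\ and \dchildcas\ mutate these pointers, as you yourself note two sentences later). This inconsistency does not affect the observation about keys, but you should drop that clause.
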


\begin{defn} \label{def-search-path}
For any configuration $C$, let $T_C$ be the $k$-ary tree formed
by the child references in configuration $C$. We define the \textbf{search path}
for key $a$ in configuration $C$ to be the unique path in $T_C$ that would be followed by the ordinary sequential $k$-ST search procedure.
\end{defn}

\begin{defn} \label{def-range-of-leaf}
We define the \textbf{range} of a leaf $u$
in configuration $C$ to be the set $R$ of keys such that,
for any key $a \in R$, $u$ is the terminal node on
the search path for $a$ in configuration $C$.
(Consequently, if $u$ is not in the tree,
 its range is the empty set.)
\end{defn}

\begin{defn}
A node whose $tag$ bit contains \true\ is said to be \textbf{tagged}.
\end{defn}

\begin{lem} \label{lem-key-update-in-range-implies-replacement}
%
Let $O$ be an invocation of \func{Insert}$(key)$ or \func{Delete}$(key)$ that returns true.
Consider the leaf in the tree whose range contains $key$ just before $O$ is linearized.
This leaf is tagged before $O$ is linearized.
\end{lem}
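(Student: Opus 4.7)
The plan is a case analysis on the type of update performed by $O$: \isimple, \icomplex, \dsimple\ or \dcomplex. Let $op$ be the descriptor that $O$ creates. Inherited without change from Brown--Helga~\cite{BH11:opodis}, the linearization point of a successful $O$ is the successful \ichildcas\ inside \helpinsert\ (for \isimple, \icomplex\ and \dsimple) or the successful \dchildcas\ inside \helpmarked\ (for \dcomplex). The same invariants give that, in the configuration immediately preceding this child-CAS, $op.l$ is still the child of $op.p$ lying on the search path for $key$, so by Definition~\ref{def-range-of-leaf} $op.l$ is the leaf whose range contains $key$ at that moment.

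It then suffices to prove that $op.l$ is tagged strictly before the linearizing child-CAS. In the first three cases, the only new line in \helpinsert\ is the assignment $op.l.tag := \true$, which appears immediately before \ichildcas. Any process that executes the successful \ichildcas\ on $op$ must, by sequential program order within its execution of \helpinsert, have already performed this tag write, so $op.l$ is tagged before the CAS. For \dcomplex, the argument is analogous using the new loop in \helpmarked\ that sets $u.tag := \true$ for every child $u \neq other$ of $op.p$ just before \dchildcas: since $other$ is chosen to be a non-empty child of $op.p$ distinct from $op.l$, the loop covers $op.l$, and program order again forces $op.l$ to be tagged before the \dchildcas.

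The main subtlety to handle is that multiple processes may concurrently help the same descriptor $op$ through \helpinsert\ or \helpmarked. This is benign for the lemma: only the unique process that performs the successful child-CAS on $op$ matters, and by sequential control flow inside its help routine, it has written \true\ to $op.l.tag$ before performing that CAS. Additional helpers only contribute idempotent tag writes and failed CAS attempts, and so cannot postpone the moment at which $op.l$ first becomes tagged. Hence $op.l$ is tagged strictly before $O$ is linearized, as claimed.
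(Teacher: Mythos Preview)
Your argument is correct and follows essentially the same approach as the paper's proof: identify the linearization point as the successful child CAS, observe that the tagging line precedes it in program order within \helpinsert/\helpmarked, and argue that $op.l$ is the leaf on the search path for $key$ at that moment. The paper collapses your four cases into two (treating \func{Insert} as a simplification of \func{Delete}) and, for the claim that $op.l$'s range contains $key$ at linearization, explicitly cites Corollary~21, Lemma~7, Corollary~16, and Lemma~19 of \cite{BH11:york} rather than appealing generically to ``the same invariants''; your treatment of the concurrent-helpers subtlety is more explicit than the paper's, which leaves it implicit.
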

\begin{proof}
Suppose $O$ is a \func{Delete}.
(The proof for \func{Insert} is a simplified version of this argument.)
Let $op$ be the \info\ object created by $O$, and $S$ be the last \func{Search} invoked by $O$.
Since $O$ returns true, it must 
perform an invocation of \helpdelete$(op)$ that invokes \helpmarked$(op)$.
This invocation of \helpdelete$(op)$ tags $op.l$ before it performs its \dchildcas.
The \info\ object $op$ is created at line~140 or line~148.
In each case, the leaf $l$ returned by $S$ is stored in $op.l$.

It remains only to prove that the range of $l$ contains $key$ just before $O$ is linearized.
However, this is easy to show, given a few facts:
\begin{compactitem}
	\item{By Corollary~21 of \cite{BH11:york}, $l$ is on the search path for $key$ just before $S$ is linearized}
	\item{By Lemma~7 of \cite{BH11:york}, $S$ is linearized before $O$}
	\item{By Corollary~16 of \cite{BH11:york} 
			$l$ is still in the tree just before $O$ is linearized}
\end{compactitem}
Because of these, Lemma~19 of \cite{BH11:york} implies that $l$ is still on the search path for $key$ when $O$ is linearized.
\end{proof}

\begin{obs} \label{obs-dirty-bit-monotonic}
A tagged leaf always remains tagged.
\end{obs}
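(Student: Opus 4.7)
The plan is to prove Observation~4.5 by a straightforward code inspection: the $tag$ field of every leaf is write-only to the value $\true$, so once set it cannot be reset. Concretely, I would proceed as follows.

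First, I would enumerate every line of pseudocode that writes to the $tag$ field of any \leaf. Since the original \kst\ of Brown and Helga \cite{BH11:opodis} does not reference $tag$ at all (this field is introduced as part of our modifications), the only writes are the lines highlighted in gray in Figure~\ref{code3}: inside \helpinsert, the line $op.l.tag := \true$, and inside \helpmarked, the loop \textit{for each child $u$ of $op.p$ with $u \neq other$, $u.tag := \true$}. I would note that the field is declared with initial value \false\ in the gray box of Figure~\ref{code1}, and that no other line of \func{Search}, \func{Find}, \func{Insert}, \func{Delete}, \help, \helpdelete, \helpinsert, \helpmarked, or \rquery\ reads $tag$ in a way that could cause a write (in fact, \rquery\ only reads the field during validation at line~49, and never writes it).

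Second, I would observe that each of the two identified assignments unconditionally writes the constant $\true$. Hence, in any execution, the sequence of values taken by $u.tag$ for any leaf $u$ is either \false, \false, \ldots, \false\ or \false, \ldots, \false, \true, \true, \ldots, with no transition back to \false.

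Finally, I would conclude that if $u.tag = \true$ in some configuration $C$, then $u.tag = \true$ in every configuration reachable from $C$, which is precisely the statement of the observation. I do not anticipate any real obstacle; the only care needed is to confirm the exhaustive enumeration of writes to $tag$, which is immediate because the modifications introduced for range queries are explicitly delimited by the gray boxes in Figures~\ref{code1} and~\ref{code3}.
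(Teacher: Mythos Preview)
Your proposal is correct. The paper states this observation without proof, treating it as immediate from code inspection; your argument---enumerating the only writes to $tag$ (the two gray-highlighted assignments in \helpinsert\ and \helpmarked, both of which unconditionally write \true) and concluding monotonicity---is precisely the justification the paper leaves implicit.
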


Now we are able to prove the main result.

\begin{lem} \label{lem-rq-correct}
Each invocation of \func{RangeQuery}$(lo, hi)$
returns a list containing precisely the set of leaves in the tree
that have a key in the range $[lo, hi]$ at the time
the \func{RangeQuery} is linearized.
\end{lem}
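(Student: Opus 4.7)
The plan is to handle the two linearization cases separately and, in each, establish soundness (every returned leaf is in the tree with a key in $[lo,hi]$ at the linearization point $t_L$) and completeness (every such leaf is returned). For Case~1 (validation succeeds on the first attempt, so $t_L$ is the start of validation), soundness of any $u\in collected$ follows by tracing $u$ back to the moment $t_u$ when it was read from its parent's child pointer during the DFS: at $t_u$ the leaf $u$ was in the tree; validation sees $u.tag=\false$ at $t_L$; Observation~\ref{obs-dirty-bit-monotonic} extends this to ``$u$ never tagged on $[t_u,t_L]$''; and then the contrapositive of Lemma~\ref{lem-key-update-in-range-implies-replacement} forbids any successful \func{Insert} or \func{Delete} from having replaced $u$ in that interval. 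Hence $u$ is in the tree at $t_L$, and by Observation~\ref{obs-nodes-immutable} its keys are unchanged, so the filter at lines~56--57 retains precisely the collected leaves that contain a key in $[lo,hi]$.

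For completeness in Case~1, let $v$ be any leaf in the tree at $t_L$ with some key $k\in[lo,hi]$; I would argue $v\in collected$ by contradiction. If $v$ were missed, the DFS must have deviated from the search path toward $k$ at some internal ancestor $a$. The pruning tests at lines~38 and~41 use $a$'s immutable keys and the search-tree property, so they cannot by themselves skip a subtree that contains $k$; the only remaining possibility is that the child pointer in $a$ leading toward $v$ had not yet been installed when DFS read it, i.e., $v$ (or a subtree containing $v$) was later grafted in by a successful update $O$ linearized before $t_L$. That update $O$ replaced some earlier leaf $v'$ that was in the tree at the moment DFS read $a$'s children, so DFS did collect $v'$. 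By Lemma~\ref{lem-key-update-in-range-implies-replacement}, $O$ tagged $v'$ before $O$'s linearization point, hence before $t_L$. Thus $v'\in collected$ is tagged at $t_L$, contradicting Case~1's success condition. Hence every qualifying $v$ is collected, and together with soundness and the filter, the returned list is exactly the desired set.

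Case~2 (double-collect validation, $t_L$ at the start of the final collect phase) uses the same tag-based soundness argument applied to the final collect, combined with the classical double-collect principle for completeness: because $collected$ and $prevCollected$ coincide as sets, any successful update changing a leaf relevant to $[lo,hi]$ between the two collects would, by Lemma~\ref{lem-key-update-in-range-implies-replacement} and Observation~\ref{obs-nodes-immutable}, leave visible evidence as a discrepancy between the two lists, so no such update occurred and the two collects effectively witness the same configuration at $t_L$. The main obstacle is the contradiction step in Case~1 completeness: pinning down the displaced leaf $v'$ and showing that DFS necessarily visited it along the path toward $k$. I expect to lean on the search-path stability results (Lemma~19, Corollary~16, and Corollary~21 of~\cite{BH11:york}) rather than develop new machinery, since those lemmas already relate configurations before and after node replacements via exactly the kind of ancestor-argument I need.
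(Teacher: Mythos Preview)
Your Case~1 treatment is close to the paper's, but Case~2 has a genuine error. You write that Case~2 ``uses the same tag-based soundness argument applied to the final collect,'' but this cannot work: when validation succeeds at line~52 the algorithm never inspects any $tag$ bit, so a collected leaf $u$ may perfectly well have $u.tag=\true$ at $t_L$ (e.g., a slow updater set the tag but has not yet executed its child \cas). The soundness of Case~2 comes from the double-collect itself, which you have assigned to completeness instead. The paper's (terse) argument is: each $u\in collected$ was reached by the DFS in both the penultimate and the final traversal; the reachability lemmas of~\cite{BH11:york} then give a time during the penultimate collect (hence before $t_L$) and a time during the final collect (hence at or after $t_L$) at which $u$ was in the tree; since a leaf, once removed, is never re-inserted, $u$ is in the tree at $t_L$. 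Your Case~2 completeness claim---that any relevant update between the two collects ``would leave visible evidence as a discrepancy''---is also not true as stated: both traversals can descend through the same stale (removed) internal node and collect the same removed leaf, so an intervening update need not produce a discrepancy. In the paper, completeness is argued \emph{uniformly} for both cases from soundness plus Facts~1--3 (the tree is always a $k$-ST, DFS adds every visited leaf, and subtrees are pruned only when their immutable key range is disjoint from $[lo,hi]$).

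There is also a gap in your Case~1 completeness step. From ``the child pointer in $a$ leading toward $v$ had not yet been installed when DFS read it'' you jump to ``$O$ replaced some earlier leaf $v'$ \dots\ so DFS did collect $v'$.'' But the value DFS read from $a.c_i$ may be an \emph{internal} node $w$ (this is exactly what happens when $O$ is a pruning deletion, which swings $a.c_i$ from an internal $p$ to a leaf $other$), and the leaf DFS eventually reaches below $w$ need not be the leaf that $O$ tagged. The idea is salvageable---one can induct along the DFS path, using that a leaf's range never shrinks while it remains in the tree, to show the terminal leaf DFS collects for key $k$ is either $v$ itself or was removed (hence tagged) before $t_L$---but it is not the one-step argument you sketch, and the paper avoids it entirely by deriving completeness directly from soundness and Facts~1--3.
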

\begin{proof}
Consider any invocation $R$ of \func{RangeQuery}$(lo, hi)$ that terminates. 
Let $cl$ be the $collected$ list created in its last attempt.
$R$ returns pointers to all leaves in the $cl$ that contain a key in $[lo, hi]$.
We start by showing that all leaves in $cl$ are in the tree when $R$ is linearized. 

Before $R$ returns, it must either see that each $u$ in $cl$ is not tagged at line~49, or see that $cl$ contains the same leaves as the $collected$ list created by the previous attempt ($prevCollected$) at line~52.
We proceed by cases.

Case 1: no leaf in $cl$ is tagged.
Let $u$ be any leaf in $cl$, and $I_u$ be the interval in $R$'s last attempt that starts when $u$ is read at line~45 and ends at the execution of line~49 where $R$ sees that is $u$ not tagged.
Leaf $u$ cannot be tagged before the end of $I_u$.
Therefore, by Lemma~\ref{lem-key-update-in-range-implies-replacement}, no \func{Insert}$(key)$ or \func{Delete}$(key)$ can be linearized during $I_u$ if $key$ is in the range of $u$.
Consequently, in $R$'s last attempt, after the last leaf is added to $cl$ (at line~45) and before validation begins (at line~47), every leaf in $cl$ is in the tree. 

Case 2: $cl$ contains the same leaves as $prevCollected$.
In this case, all leaves in $cl$ were in the tree at the beginning of the collect phase of $R$'s last attempt.

We now argue that every leaf which contains a key in $[lo, hi]$ when $R$ is linearized is in $cl$. 
This follows immediately from three facts:
\begin{compactenum}[1.]
	\item{In every configuration $C$,
		the tree is a full $k$-ary tree
		and the subtree rooted at $root.c_1$
		is a $k$-ST.}
	\item{The depth-first search (DFS)
		in the last attempt of $R$ 
		starts at $root.c_1$,
		and adds every leaf it visits to $cl$.}
	\item{The only sub-trees that are not traversed by the DFS
		are those that cannot contain a key in $[lo, hi]$.}
\end{compactenum}
Fact 1 follows from Lemma~14 and Lemma~22 of \cite{BH11:york}.
Fact 2 is immediate from the pseudocode.

We now prove Fact 3.
Consider any execution of \func{RangeQuery}$(lo, hi)$ in which some child $v.c_i$ of a node $v$ is \textit{not} pushed onto the stack.
We consider two cases.
Suppose 
$i > r$ when the loop at line~44 is executed.
In this case, $r$ was decremented from $i$ to $i-1$ earlier at line~38, which means the comparison $hi < v.a_{i-1}$ at that line evaluated to \true.
Since the tree is a $k$-ary search tree, the entire subtree rooted at $v.c_i$ contains keys greater than or equal to $v.a_{i-1}$, which is greater than $hi$ (so, the subtree cannot contain a key in $[lo, hi]$).
Now, suppose 
$i < l$ when the loop at line~44 is executed.
In this case, $r$ was incremented from $i$ to $i+1$ earlier at line~41, which means the comparison $lo \ge v.a_i$ at that line evaluated to \true.
Since the tree is a $k$-ary search tree, the entire subtree rooted at $v.c_i$ contains keys strictly less than than $v.a_i$, which is less than or equal to $lo$ (so, the subtree cannot contain a key in $[lo, hi]$).
\end{proof}

\subsection{Correctness and progress of the complete implementation}

\begin{thm}
The implementation of \func{Insert}, \func{Delete}, \func{Find} and \rquery\ described in Figure~\ref{code0}, Figure~\ref{code1}, Figure~\ref{code2} and Figure~\ref{code3} is correct.
\end{thm}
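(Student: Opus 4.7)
The plan is to prove this theorem as a short synthesis of the two preceding subsections rather than to redo any low-level reasoning from scratch. The key observation is that \emph{all} of the technical content has already been established: the first subsection argues that the tagging modifications do not disturb the correctness of \func{Insert}, \func{Delete}, and \func{Find}, and Lemma~\ref{lem-rq-correct} establishes that \rquery\ returns exactly the right set of leaves at its linearization point. What remains is to stitch these together into a single linearizability argument for the combined implementation.

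First I would invoke the correctness proof of Brown and Helga \cite{BH11:york} for \func{Insert}, \func{Delete}, and \func{Find}, and verify that none of the five added lines (highlighted in gray in Figure~\ref{code3}) interferes with that proof. The added statements only write to the $tag$ field of a leaf, and no procedure in Figure~\ref{code2} or Figure~\ref{code3} ever reads $tag$. Therefore the tag writes are completely invisible to \func{Insert}, \func{Delete}, \func{Find}, and the various \help\ routines, and the original linearization points (and the lemmas of \cite{BH11:york} that I have already cited by number) continue to apply unchanged. Likewise, \rquery\ is read-only with respect to shared memory --- all of its writes target the thread-local variables $s$, $collected$, $prevCollected$, and $result$ --- so concurrently running \rquery s cannot perturb the execution of any \func{Insert}/\func{Delete}/\func{Find}.

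Next I would use Lemma~\ref{lem-rq-correct} directly: every terminating \rquery$(lo,hi)$ returns precisely the leaves in the tree containing a key in $[lo,hi]$ at its linearization point, and by Observation~\ref{obs-nodes-immutable} the keys of those returned leaves are exactly the keys present at that point. Combined with the set of linearization points specified for \rquery\ (start of the validation phase in case 1, start of the last collect phase in case 2), this gives a correct specification of \rquery's return value consistent with the sequential set ADT extended by range queries.

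The only real obstacle is a bookkeeping one: the linearization points of \func{Insert}, \func{Delete}, \func{Find} (from \cite{BH11:york}) and the linearization points of \rquery\ (defined just before Lemma~\ref{lem-rq-correct}) all need to yield a single total order that makes the whole history equivalent to a legal sequential execution of the set ADT. I would handle this by observing that every linearization point used is a specific physical step of a specific process (a \cas\ step, or an execution of line~47 or line~25), so these points are automatically totally ordered by real time and lie inside the corresponding operation's interval; consequently the union of the two separately-justified linearization mappings is itself a valid linearization of the combined execution, which completes the proof.
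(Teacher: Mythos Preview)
Your proposal is correct and follows essentially the same approach as the paper: the paper's own proof is the single line ``Immediate from the proof in \cite{BH11:york} and Lemma~\ref{lem-rq-correct},'' relying on the earlier subsections for exactly the non-interference arguments you spell out. Your final paragraph about merging the two sets of linearization points into one real-time total order is a detail the paper leaves implicit, but it is sound and does not constitute a different route.
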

\begin{proof}
Immediate from the proof in \cite{BH11:york} and Lemma~\ref{lem-rq-correct}.
\end{proof}

\begin{thm}
The implementation of \func{Insert}, \func{Delete}, \func{Find} and \rquery\ described in Figure~\ref{code0}, Figure~\ref{code1}, Figure~\ref{code2} and Figure~\ref{code3} is non-blocking.
\end{thm}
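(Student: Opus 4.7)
The plan is to reduce to the non-blocking proof of the base $k$-ST in \cite{BH11:york} and then verify that the additions preserve progress. First I would argue that the two modifications --- the writes $u.tag := \true$ added to \helpinsert\ and \helpmarked, and the introduction of \rquery\ --- leave the non-blocking proof of the base $k$-ST unchanged. The procedures \func{Insert}, \func{Delete}, \func{Find}, \help, \helpinsert, \helpdelete\ and \helpmarked\ never read the $tag$ field, and \rquery\ performs no CAS nor any write to shared memory; hence neither modification can cause a CAS of the base algorithm to fail nor alter the control flow of any of its procedures, so the proof of \cite{BH11:york} still shows: in every infinite execution in which some process takes infinitely many steps inside \func{Insert}, \func{Delete} or \func{Find}, infinitely many of those invocations return.

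Next I would suppose, for contradiction, that the full implementation is not non-blocking. Then there is an infinite execution with a step $T$ after which no invocation of any kind returns. If some process takes infinitely many steps of an \func{Insert}, \func{Delete} or \func{Find} after $T$, the previous observation gives an immediate contradiction. Otherwise every process taking infinitely many steps after $T$ must be executing a single non-terminating \rquery\ invocation $R$. Each attempt of $R$ runs a DFS on the finite subtree rooted at $root.c_1$ followed by a finite validation, so each attempt terminates; hence $R$ performs infinitely many attempts after $T$. After the first, every failed attempt restarts at line~52, which requires $collected$ and $prevCollected$ to differ. By Observation~\ref{obs-nodes-immutable}, two DFS collects can produce different leaf sets only if at least one child pointer was changed between them, and child pointers are changed only by successful \ichildcas\ or \dchildcas. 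Hence infinitely many successful child CASes occur after $T$.

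The final step is to convert ``infinitely many successful child CASes after $T$'' into ``infinitely many invocations return after $T$.'' Each successful \ichildcas\ is executed inside an invocation of \helpinsert\ that afterwards performs only an unflag CAS before returning, and likewise each successful \dchildcas\ is followed within \helpmarked\ only by an unflag CAS and a return. The enclosing call is either the owner's own \func{Insert}/\func{Delete}, which then returns \true, or a helper inside \help, whose calling \func{Insert}/\func{Delete} then retries once and terminates. Invoking the helping analysis of \cite{BH11:york}, each successful child CAS after $T$ can therefore be charged to a bounded number of subsequent steps of some live process at which an \func{Insert} or \func{Delete} invocation returns, producing infinitely many completed invocations after $T$ and contradicting the choice of $T$.

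The main obstacle is this last charging step: a successful child CAS may be performed by a helper whose owner's process has crashed, so the claim ``one returning invocation per successful child CAS'' needs the bounded-helping bookkeeping of \cite{BH11:york} rather than a naive one-to-one argument. The tag writes do not interfere with this accounting, because they are unconditional writes --- never tested inside \func{Insert}, \func{Delete}, or \help\ --- placed immediately before each child CAS, and they cannot cause any CAS to fail.
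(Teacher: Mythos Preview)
Your argument is correct in outline, but it takes a detour that the paper avoids, and the step you flag as the ``main obstacle'' is in fact unnecessary.

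The paper proceeds in the opposite order from you. It first observes that in any suffix in which no operation completes, only finitely many successful \ichildcas\ or \dchildcas\ steps can occur, so child pointers eventually stop changing; then, in the resulting stable suffix, either an \func{Insert}/\func{Delete}/\func{Find} takes infinitely many steps (and the proof in \cite{BH11:york} gives the contradiction), or only \rquery\ operations remain, and on a fixed tree two consecutive collects are identical, so the \rquery\ terminates.

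Your route case-splits first and then, in the ``only \rquery'' case, derives infinitely many successful child CASes and tries to charge each one to a returning invocation. But notice that the charging is superfluous: by your own case hypothesis, no process takes infinitely many \func{Insert}/\func{Delete}/\func{Find} steps after $T$, and those are the \emph{only} procedures that execute child CASes (\rquery\ writes nothing). Hence only finitely many child CASes can occur after $T$, which already contradicts your Step~5 conclusion. Equivalently, the tree stabilizes after the last such step, and thereafter two consecutive collects agree --- exactly the paper's endgame. So your ``main obstacle'' evaporates once you use the case hypothesis a second time.

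Two smaller points. First, your charging argument, as written, is shaky: a successful child CAS performed by a \emph{helper} on behalf of a crashed owner's descriptor need not be followed by any returning invocation, so the ``one return per child CAS'' accounting does not go through without more work. Since the step is not needed, this does not break your proof, but you should not rely on it. Second, your claim that ``each attempt terminates'' (DFS on a finite tree) is asserted while the tree may still be changing; it is easily justified in your Case~2 because the tree stabilizes after finitely many update steps, but you should say so rather than appeal to finiteness of a moving target.
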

\begin{proof}
    Suppose, to obtain a contradiction, that the claim is false.
    In other words, suppose there is an infinite execution with some suffix $S$ in which operations take steps infinitely often, but no operations complete.
    By inspection of the code, child pointers of nodes are only changed by \ichildcas\ and \dchildcas\ steps.
    Every operation that performs a successful \ichildcas\ or \dchildcas\ terminates after a finite number of steps.
    Thus, since since no operations complete, eventually all child pointers in the tree stop changing.
    Let $S'$ be a suffix of $S$ in which child pointers 
    never change.
    Let $\bar O$ 
    be the set of operations that take steps infinitely often in $S'$.
    If $\bar O$ contains a \func{Find}, \func{Insert} or \func{Delete} operation, then the proof in \cite{BH11:york} implies that some operation in $\bar O$ will eventually complete (yielding a contradiction).
    Thus, $\bar O$ must contain only \rquery\ operations. 
    Since no child pointer ever changes, some \rquery\ $R$ in $\bar O$ will eventually perform two attempts that collect the same set of nodes, and terminate (yielding a contradiction).
\end{proof}

\section{Experiments} \label{sec-exp}

\paragraph{Data structures}

In this section, we present the results of experiments comparing the performance of our \kst\ (for $k = 16, 32, 64$) with Snap, Ctrie, and SL, the non-blocking, randomized skip-list of the Java Foundation Classes.
We used the authors' implementations of Snap and Ctrie.
Java code for our \kst\ is publicly available at \url{http://tbrown.pro}.
All of these data structures implement the dictionary ADT, where \ins$(key)$ returns \false\ if an element with this key is already in the dictionary.

\paragraph{Range query implementations}

For SL, a \rquery\ is performed by executing the method
\texttt{subSet(} \texttt{lo, true, hi, true)},
which returns a reference to an object permitting
iteration over the keys in the structure, restricted to $[lo,hi]$,
and then copying each of these keys into an array.
This does not involve any sort of snapshot,
so SL's range queries are not always linearizable.
For Snap, a \rquery\ is performed by following the same process as SL:
i.e., executing \texttt{subSet}, and copying keys into an array.
However, unlike SL, iterating over the result of Snap's \texttt{subSet} causes a snapshot of the data structure to be taken.
Since keys are ordered by their hashed values in Ctrie,
it is hard to perform range queries efficiently.
Instead, we attempt to provide an approximate lower bound on the
computational difficulty of computing \rquery$(lo, hi)$ for
any derivative of Ctrie which uses the same fast snapshot technique.
To do this, we simply take a snapshot,
then iterate over the first $(hi-lo+1)/2$ keys in the snapshot,
and copy each of these keys into an array.
(We explain below that $(hi-lo+1)/2$ is the expected number of keys returned by a \rquery.)
To ensure a fair comparison with the other data structures,
our \kst's implementation of \rquery\ returns an array of keys.
If it is allowed to return a list of leaves,
its performance improves substantially.

\paragraph{Systems}

Our experiments were performed on two multi-core systems.
The first is a Fujitsu PRIMERGY RX600 S6 with 128GB of RAM and four Intel Xeon E7-4870 processors, each having 10 $\times$ 2.4GHz cores, supporting a total of 80 hardware threads (after enabling hyper-threading).
The second is a Sun SPARC Enterprise T5240 with 32GB of RAM and two UltraSPARC\-T2+ processors, each having 8 $\times$ 1.2GHz cores, for a total of 128 hardware threads.
On both machines, the Sun 64-bit JVM version 1.7.0\_3 was run in server mode, with 512MB minimum and maximum heap sizes.
We decided on 512MB after performing preliminary experiments to find a heap size which was small enough to regularly trigger garbage collection and large enough to keep standard deviations small.
We also ran the full suite of experiments for both 256MB and 15GB heaps.
The results were quite similar to those presented below, except that the 256MB heap caused large standard deviations, and the total absence of garbage collection with the 15GB heap slightly favoured Ctrie.

\begin{figure}[p]
	\centering
	\def\figsize{0.75}
	
	\begin{minipage}{0.48\textwidth}
		\centering
		\textbf{Sun UltraSPARC T2+}\\
		\includegraphics[scale=\figsize]{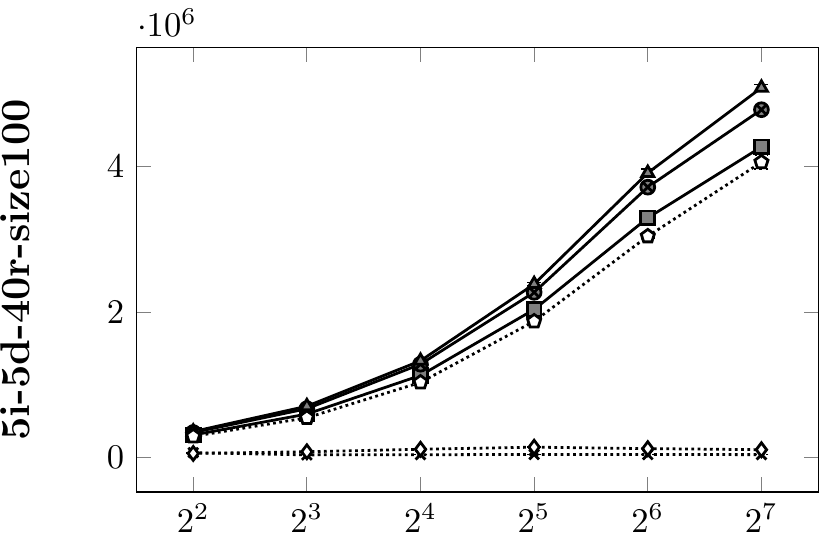}\\
		\includegraphics[scale=\figsize]{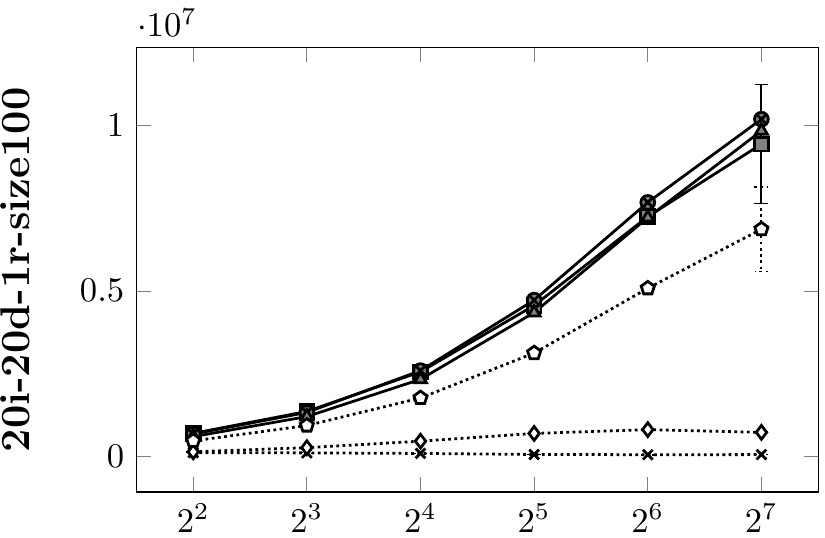}\\
		\includegraphics[scale=\figsize]{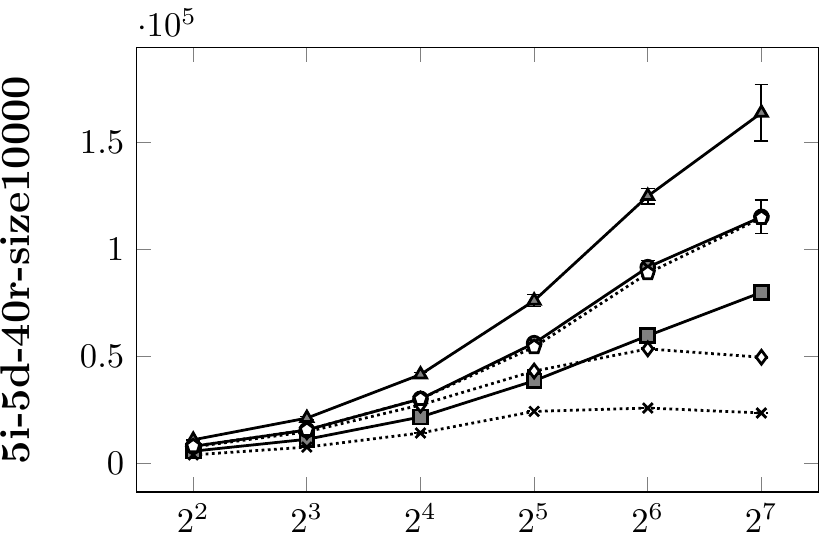}\\
		\includegraphics[scale=\figsize]{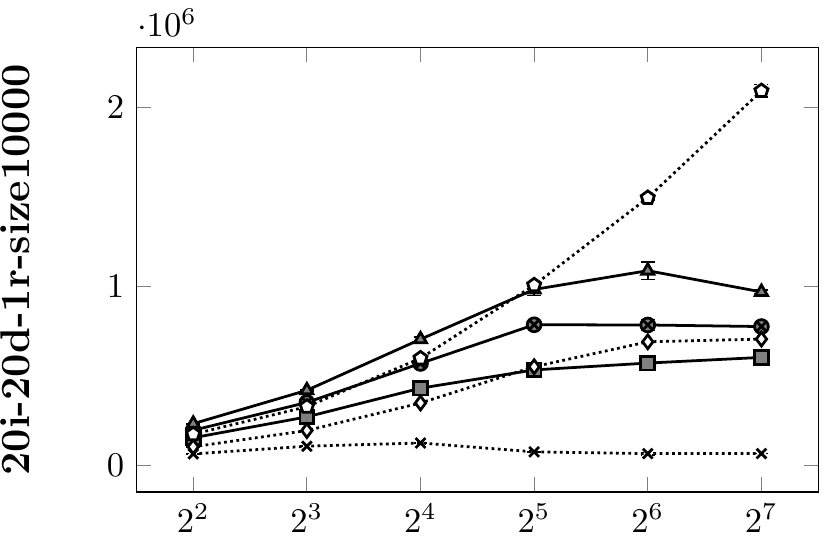}
	\end{minipage}
	\hspace{0.02\textwidth}
	\begin{minipage}{0.48\textwidth}
		\centering
		\textbf{Intel Xeon E7-4870}
		\includegraphics[scale=\figsize]{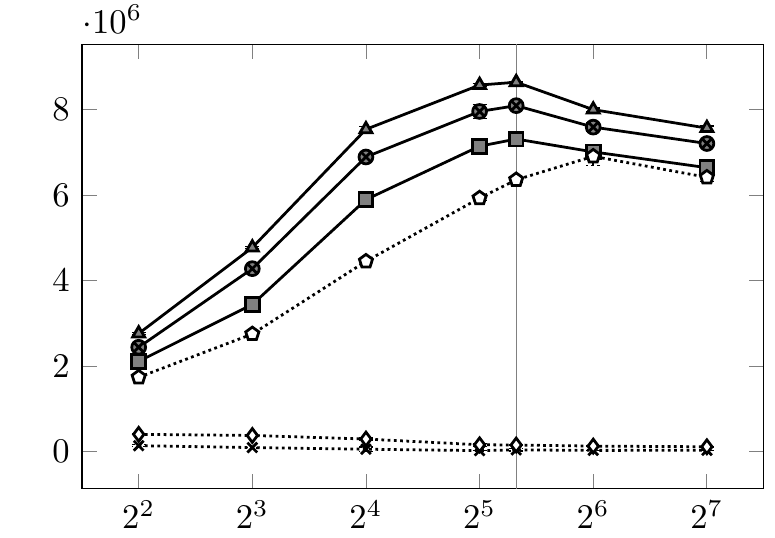}\\
		\includegraphics[scale=\figsize]{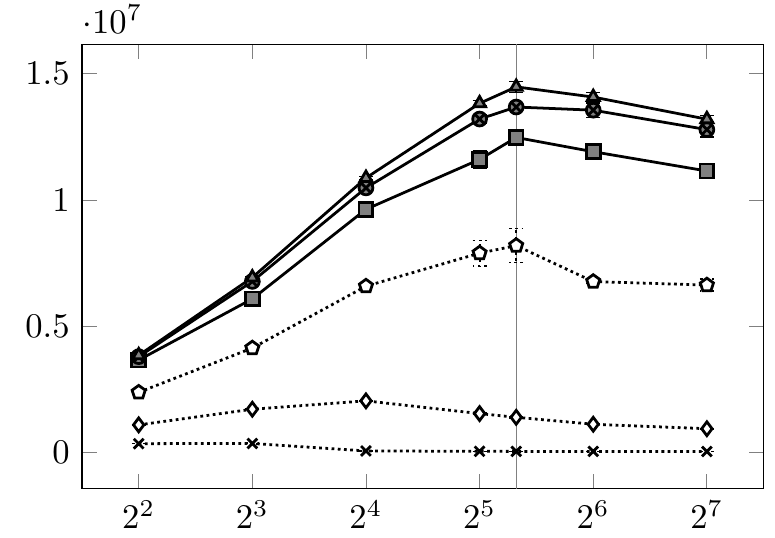}\\
		\includegraphics[scale=\figsize]{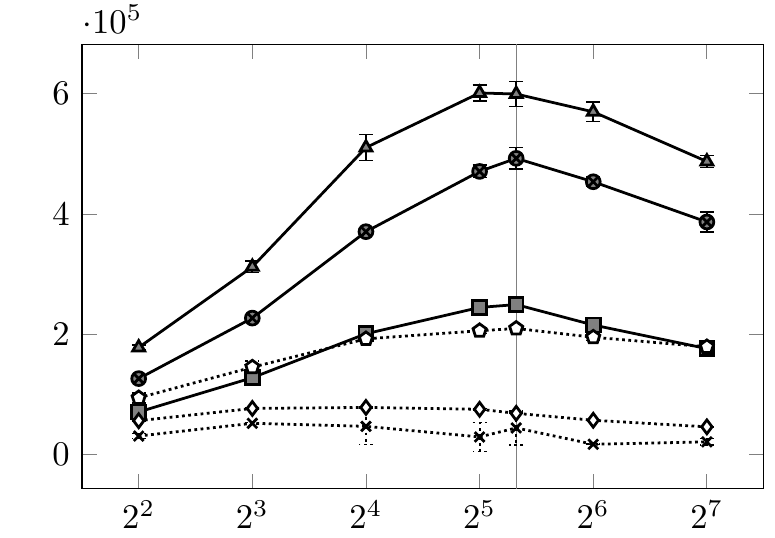}\\
		\includegraphics[scale=\figsize]{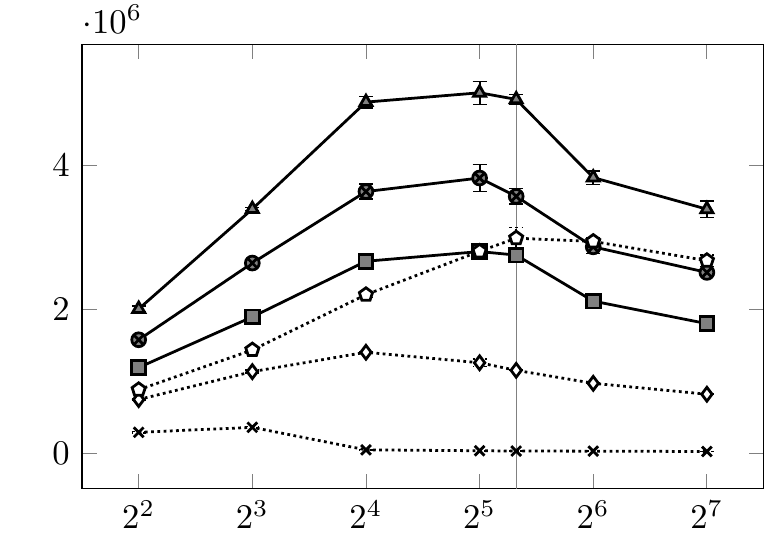}
	\end{minipage}
	
	\vspace{2mm}
	\mbox{\hspace{9.6mm}}
	\includegraphics[scale=0.12]{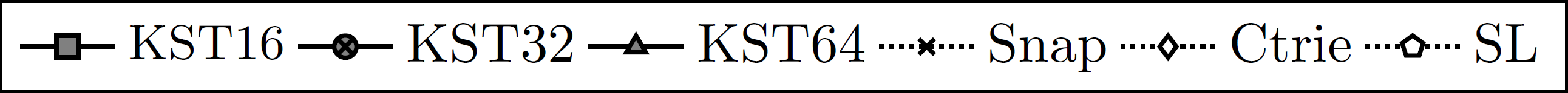}
	\vspace{-2mm}
	\caption{
			Experimental results for various operation mixes (in rows)
			for two machines (in columns).
			The $x$-axes show the number of threads executing,
			and the $y$-axes show throughput (ops/second).
			The Intel machine has 40 cores (marked with a vertical bar).
	}
	\label{fig-exp}
\end{figure}

\begin{figure}[tb]
	\centering
	\def\figsize{0.75}

	\begin{minipage}{0.48\textwidth}
		\centering
		\includegraphics[scale=\figsize]{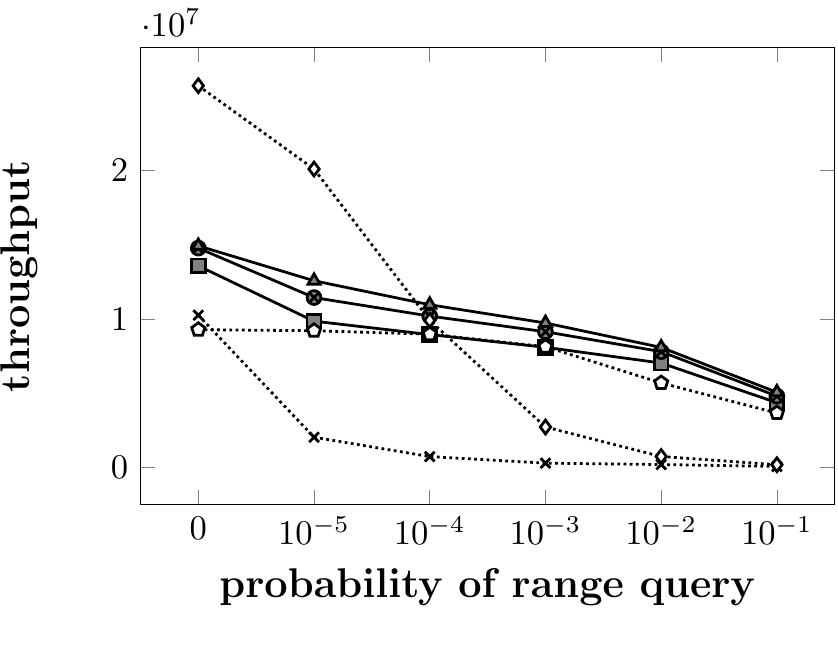}\\
	\end{minipage}
	\hspace{0.02\textwidth}
	\begin{minipage}{0.48\textwidth}
		\centering
		\includegraphics[scale=\figsize]{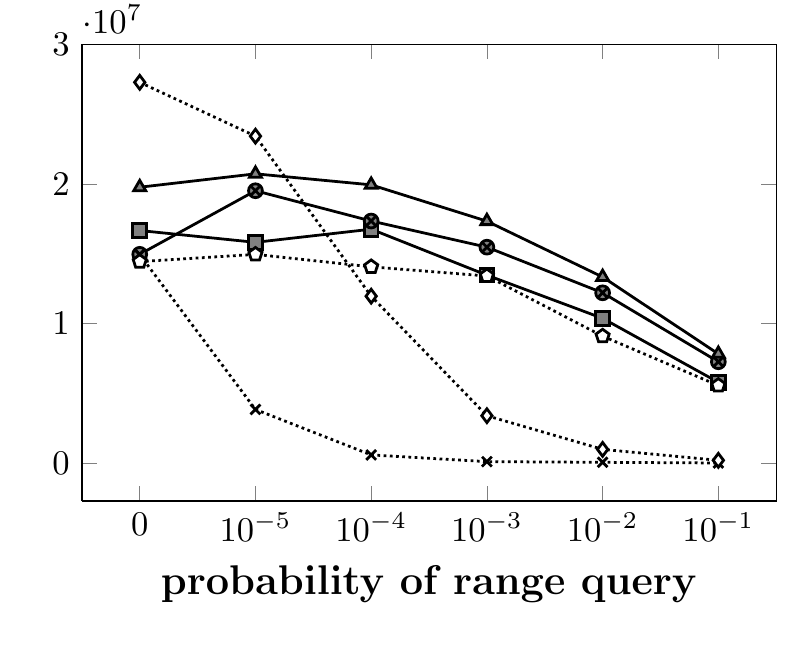}\\
	\end{minipage}

	\mbox{\hspace{9.6mm}
	\includegraphics[scale=0.10]{figures/graphs/legend.png}}
	
	\vspace{-2mm}
	\caption{Sun (left) and Intel (right) results for experiment 5i-5d-?r-100
			wherein we vary the probability of range queries.
			Note: as we describe in Section~\ref{sec-exp},
			Ctrie is merely performing a partial snapshot,
			rather than a range query.
			The Sun machine is running 128 threads,
			and the Intel machine is running 80 threads.
			}
	\label{fig-versus-rq}
\end{figure}
	
\begin{figure}[tb] 
	\centering
	\def\figsize{0.75}
	
	\begin{minipage}{0.48\textwidth}
		\centering
		\includegraphics[scale=\figsize]{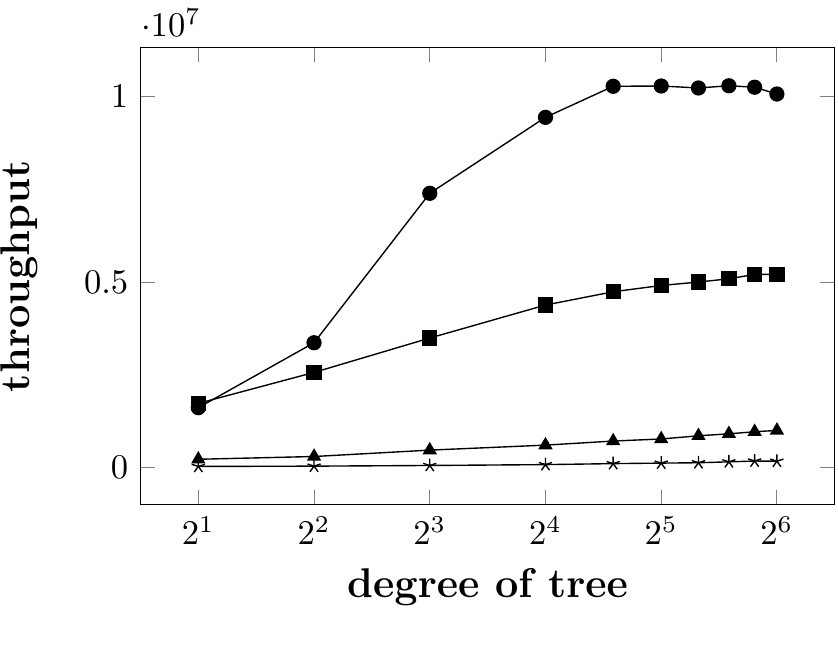}\\
	\end{minipage}
	\hspace{0.02\textwidth}
	\begin{minipage}{0.48\textwidth}
		\centering
		\includegraphics[scale=\figsize]{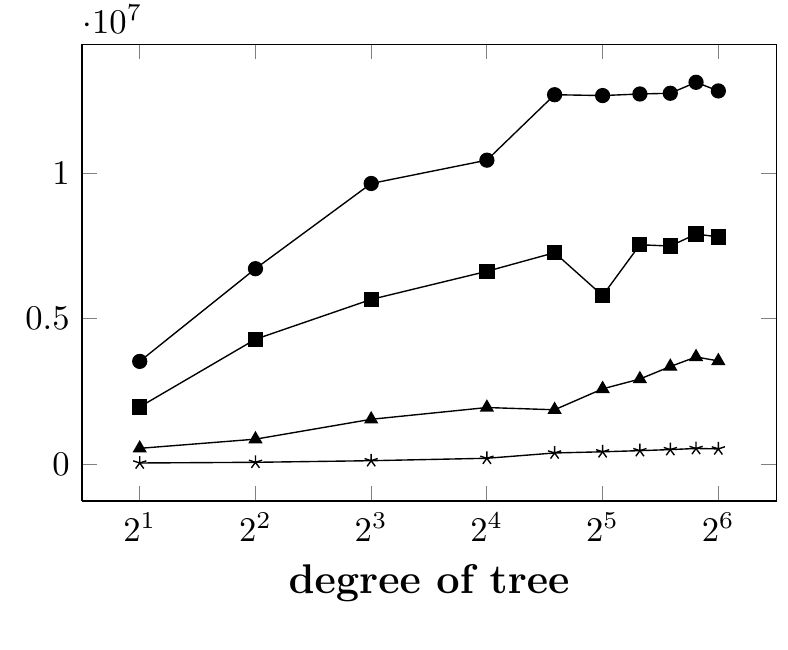}\\
	\end{minipage}
	
	\mbox{\hspace{9.6mm}
	\includegraphics[scale=0.1]{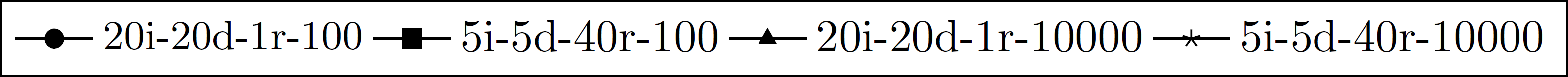}}
	
	\vspace{-2mm}
	\caption{Sun (left) and Intel (right) results showing
			the performance of the \kst\ for many values of $k$,
			and for various operation mixes.
			The Sun machine is running 128 threads,
			and the Intel machine is running 80 threads.}
	\label{fig-versus-arity}
\end{figure}

\paragraph{Methodology}

For each experiment in $\{$5i-5d-40r-size10000, 5i-5d-40r-size100, 20i-20d-1r-size10000, 20i-20d-1r-size100$\}$, each algorithm in $\{$KST16, KST32, KST64, Snap, Ctrie, SL$\}$, and each number of threads in $\{4,8,16,32,64,128\}$, we ran 3 trials, each performing random operations on keys drawn uniformly randomly from the key range $[0,10^6)$ for ten seconds.  Operations were chosen randomly according to the experiment.  Experiment ``$x$i-$y$d-$z$r-size$s$'' indicates $x$\% probability of a randomly chosen operation to be an \ins, $y$\% probability of a \del, $z$\% probability of \rquery$(r,r+s)$, where $r$ is a key drawn uniformly randomly from $[0,10^6)$, and the remaining ($100-x-y-z$)\% probability of a \find.

Each data structure was pre-filled before each trial by performing random \ins\ and \del\ operations, each with 50\% probability, until it stabilized at approximately half full (500,000 keys).
Each data structure was within 5\% of 500,000 keys at the beginning and end of each trial.
This is expected since, for each of our experiments, at any point in time during a trial, the last update on a particular key has a 50\% chance of being an \ins, in which case it will be in the data structure, and a 50\% chance of being a \del, in which case it will not.
Thus, $(hi-lo+1)/2$ is the expected number of keys in the data structure that are in $[lo, hi]$.

In order to account for the ``warm-up'' time an application experiences while Java's HotSpot compiler optimizes its running code, we performed a sort of pre-compilation phase before running our experiments.  During this pre-compilation phase, for each algorithm, we performed random \ins\ and \del\ operations, each with 50\% probability, for twenty seconds.

\paragraph{Results}

Our experiments appear in Figure~\ref{fig-exp}.
The x-axis shows the number of concurrent threads on a logarithmic scale.
Our graphs do not include data for 1 or 2 threads, since the differences between the throughputs of all the algorithms was very small.
Since the Intel machine has 40 cores, we added data points at 40 threads, and drew a vertical bar at 40 threads..
Error bars are drawn to represent one standard deviation.

Broadly speaking, our experimental results from the Sun machine look similar to those from the Intel machine.  
If we ignore the results on the Intel machine for thread counts higher than 40 (the number of cores in the machine), then the shapes of the curves and relative orderings of algorithms according to performance are similar between machines.  A notable exception to this is SL, which tends to perform worse, relative to the other algorithms, on the Intel machine than on the Sun machine.  This is likely due to architectural differences between the two platforms.  Another Intel Xeon system, a 32-core X7560, has also shown the same scaling problems for SL (see \cite{BH11:opodis} technical report).

We now discuss similarities between the experiments 5i-5d-40r-size100 and 20i-20d-1r-size100, which involve small range queries, before delving into their details.
The results from these experiments are highly similar.
In both experiments, all \kst s outperform Snap and Ctrie by a wide margin.
Each range query causes Snap (Ctrie) to take a snapshot, forcing all updates (updates and queries) to duplicate nodes continually.
Similarly, SL significantly outperforms Snap and Ctrie, but it does not exceed the performance of any \kst\ algorithm.
Ctrie always outperforms Snap, but the difference is often negligible.
In these experiments, at each thread count, the \kst\ algorithms either perform comparably, or are ordered KST16, KST32 and KST64, from lowest to highest performance.

Experiment 5i-5d-40r-size100 represents the case of few updates and many small range queries.
The \kst\ algorithms perform extremely well in this case.
On the Sun machine (Intel machine), KST16 has 5.2 times (5.3 times) the throughput of Ctrie at four threads, and 38 times (61 times) the throughput at 128 threads.
The large proportion of range queries in this case allows SL, with its extremely fast, non-linearizable \rquery\ operation, to nearly match the performance of KST16 on the Sun machine.

Experiment 20i-20d-1r-size100 represents the case of many updates and few small range queries.
The \kst s are also strong performers in this case.
On the Sun machine (Intel machine), KST16 has 4.7 times (3.4 times) the throughput of Ctrie at four threads, and 13 times (12 times) the throughput at 128 threads.
In contrast to experiment 5i-5d-40r-size100, since there are few range queries, KST32 and KST64 do not perform significantly better than KST16.
Similarly, with few range queries, the simplicity of SL's non-linearizable \rquery\ operation does not get a chance to significantly affect SL's throughput.
Compared to experiment 5i-5d-40r-size100, the throughput of SL significantly decreases, relative to the \kst\ algorithms.
Whereas KST16 only outperforms SL by 5.2\% at 128 threads on the Sun machine in experiment 5i-5d-40r-size100, it outperforms SL by 37\% in experiment 20i-20d-1r-size100.

We now discuss similarities between the experiments 5i-5d-40r-size10000 and 20i-20d-1r-size10000, which involve large range queries.
In these experiments, at each thread count, the \kst\ algorithms are ordered KST16, KST32 and KST64, from lowest to highest performance.
Since the size of its range queries is fairly large (5,000 keys), Ctrie's fast snapshot begins to pay off and, for most thread counts, its performance rivals that of KST16 or KST32 on the Sun machine.
However, on the Intel machine, it does not perform nearly as well, and its throughput is significantly lower than that of SL and the \kst\ algorithms.
Ctrie always outperforms Snap, and often does so by a wide margin.
SL performs especially well in these experiments, no doubt due to the fact that its non-linearizable \rquery\ operation is unaffected by concurrent updates.

Experiment 5i-5d-40r-size10000 represents the case of few updates and many large range queries.
In this case, SL ties KST32 on the Sun machine, and KST16 on the Intel machine.
However, KST64 outperforms SL by between 38\% and 43\% on the Sun machine, and by between 89\% and 179\% on the Intel machine.
On the Sun machine, Ctrie's throughput is comparable to that of KST16 between 4 and 64 threads, but KST16 outperforms Ctrie by 61\% at 128 threads.
KST64 outperforms Ctrie by between 44\% and 230\% on the Sun machine, and offers between 3.1 and 10 times the performance on the Intel machine.

Experiment 20i-20d-1r-size10000 represents the case of many updates and few large range queries.
On the Sun machine, SL has a considerable lead on the other algorithms, achieving throughput as much as 116\% higher than that of KST64 (the next runner up).
The reason for the \kst\ structures' poor performance relative to SL is two-fold.
First, SL's non-linearizable range queries are not affected by concurrent updates.
Second, the extreme number of concurrent updates increases the chance that a range query of the \kst\ will have to retry.
On the Intel machine, KST64 still outperforms SL by between 21\% and 136\%.
As in the previous experiment, Ctrie ties KST16 in throughput on the Sun machine.
However, KST64 achieves 127\% (270\%) higher throughput than Ctrie with four threads, and 37\% (410\%) higher throughput at 128 threads on the Sun machine (Intel machine).

As we can see from Figure~\ref{fig-versus-rq}, in the total absence of range queries, Ctrie outperforms the \kst\ structures.  However, mixing in just one range query per 10,000 operations is enough to bring it in line with the \kst\ structures.  As the probability of an operation being a range query increases, the performance of Ctrie decreases dramatically.  Snap performs similarly to the \kst\ structures in the absence of range queries, but its performance suffers heavily with even one range query per 100,000 operations.

We also include a pair of graphs in Figure~\ref{fig-versus-arity} for the Intel and Sun machines, respectively, which show the performance of the \kst\ over many different values of $k$, for each of the four experiments.  Results for both machines are similar, with larger values of $k$ generally producing better results.  On both machines, the curve for experiment 20i-20d-1r-size100 flattens out after $k=24$, and 5i-5d-40r-size100 begins to taper off after $k=32$.
Throughput continues to improve up to $k=64$ for the other experiments.
The scale of the graphs makes it difficult to see the improvement in 5i-5d-40r-size10000 but, on the Sun (Intel) machine, its throughput at $k=64$ is 6 times (16 times) its throughput at $k=2$.
This seems to confirm our belief that larger degrees would improve performance for range queries.
Surprisingly, on the Intel machine, experiment 20i-20d-1r-size10000 sees substantial throughput increases after $k=24$.
It would be interesting to see precisely when a larger $k$ becomes detrimental for each curve.

\section{Conclusion} \label{sec-future}

%
%

The \kst\ data structure is unbalanced, so there are pathological inputs that can yield poor performance.  
Additionally, since the \kst\ has no rebalancing operations to combine nodes that contain very few keys, the average number of keys stored in each node can be fairly small in practice (wasting memory and limiting the benefit of increasing $k$).
Recently, Brown implemented lock-free relaxed $(a,b)$-trees~\cite{BrownPhD}, which are balanced trees in which nodes contain between $a$ and $b$ keys, where $b \ge 2a-1$.
We believe that our range query technique can also be used with this implementation, and with several other data structure implementations presented in~\cite{BrownPhD}. 

Another issue is that, in the presence of continuous updates, range queries may starve. 
%
It maybe possible to mitigate this issue by having the \rquery\ operation write to shared memory, and having other updates help concurrent range queries complete.
This direction is left for future research.
Since this work, Petrank and Timnat~\cite{Petrank:2013:LDI:2950115.2950141} have introduced a new technique for implementing \textit{iteration} operations (range queries over the entire universe) that are starvation free, for a certain class of data structures. 
Of course, range queries can be trivially implemented from atomic iteration.
Chatterjee~\cite{Chatterjee:2017:LLR:3007748.3007771} generalized the work of Petrank and Timnat to implement more efficient range queries and to support a larger class of data structures. 
Arbel-Raviv and Brown subsequently introduced a simpler and more efficient technique that supports many more data structures than the preceding approaches~\cite{AB2017}.

Despite the potential for starvation, we believe our present method of performing range queries is practical in many cases.
First, range queries over small intervals involve few nodes, minimizing the opportunity for concurrent updates to interfere.
Second, for many database applications, a typical workload has many more queries (over small ranges) than updates.
For example, consider an airline's database of flights.
Only a fraction of the queries to their database are from serious customers, and a customer may explore many different flight options and date ranges before finally purchasing a flight and updating the database. 


In this work, we described an implementation of a linearizable, non-blocking $k$-ary search tree offering fast searches and range queries.
Our experiments show that, under several workloads, this data structure is the only one with scalable, linearizable range queries.

\subsubsection*{Acknowledgements.}

We would like to thank Faith Ellen for her help in organizing and editing this paper.
We also thank Michael L. Scott at the University of Rochester for graciously providing access to the Sun machine.
This research was supported, in part, by the Natural Sciences and Engineering Research Council of Canada.

\bibliographystyle{abbrv}

\bibliography{bibliography}

\end{document}